\newtheorem{theorem}{Theorem}
\newtheorem{proof}{Proof}
\newcommand\absT[1]{\left|#1\right|^2}
\newcommand\absF[1]{\left|#1\right|^4}
\journal{Elsevier Signal Processing Journal}
\begin{document}
\begin{frontmatter}
\title{Constant Modulus Algorithms Using Hyperbolic Givens Rotation}

\author[Canada] {A. Ikhlef}							\ead{aikhlef@ece.ubc.ca}
\author[Algeria]{R. Iferroujene}					\ead{redha.ifer@gmail.com}
\author[France] {A. Boudjellal\corref{cor1}}		\ead{abdelouahab.boudjellal@etu.univ-orleans.fr.}
\author[France] {K. Abed-Meraim\corref{cor2}}		\ead{karim.abed-meraim@univ-orleans.fr}
\author[Algeria]{A. Belouchrani.}					\ead{adel.belouchrani@enp.edu.dz.}

\address[Canada]	{ECE Dep., Univ. of British Columbia, 2356 Main Mall, Vancouver, V6T 1Z4, Canada.}		
\address[Algeria]	{EE Dep., Ecole Nationale Polytechnique, BP 182 EL Harrach, 16200 Algiers, Algeria.}	
\address[France]	{Polytech'Orleans, PRISME Laboratory, 12 Rue de Blois, 45067 Orleans, France.}	
\cortext[cor1]{Corresponding author.}
\cortext[cor2]{Principal corresponding author.}

\begin{abstract}
We propose two new algorithms to minimize the constant modulus (CM) criterion in the context of blind source separation. The first algorithm, referred to as Givens CMA (G-CMA) uses unitary Givens rotations and proceeds in two stages: prewhitening step, which reduces the channel matrix to a unitary one followed by a separation step where the resulting unitary matrix is computed using Givens rotations by minimizing the CM criterion. However, for small sample sizes, the prewhitening does not make the channel matrix close enough to unitary and hence applying Givens rotations alone does not provide satisfactory performance. To remediate to this problem, we propose to use non-unitary Shear (Hyperbolic) rotations in conjunction with Givens rotations. This second algorithm referred to as Hyperbolic G-CMA (HG-CMA) is shown to outperform the G-CMA as well as the Analytical CMA (ACMA) in terms of separation quality. The last part of this paper is dedicated to an efficient adaptive implementation of the HG-CMA and to performance assessment through numerical experiments.
\end{abstract}

\begin{keyword}
Blind Source Separation, Constant Modulus Algorithm, Adaptive CMA, Sliding Window, Hyperbolic Rotations, Givens Rotations.
\end{keyword}
\end{frontmatter}

\section{Introduction}
\label{Sec:Intro}
During the last two decades, Blind Source Separation (BSS) has attracted an important interest. The main idea of BSS consists of finding the transmitted signals without using pilot sequences or a priori knowledge on the propagation channel. Using BSS in communication systems has the main advantage of eliminating training sequences, which can be expensive or impossible in some practical situations, leading to an increased spectral efficiency. Several BSS criteria have been proposed in the literature e.g. \cite{Haykin_Bk_00, Comon_Bk}. The CM criterion is probably the best known and most studied higher order statistics based criterion in blind equalization \cite{Yang_98, Abrar_10, Amine_04, Labed_13} and signal separation \cite{Adel_96, Papadias_00, Papadias_04, Veen_Chap_05} areas. It exploits the fact that certain communication signals have the constant modulus property, as for example phase modulated signals. The Constant Modulus Algorithm (CMA) was developed independently by \cite{Godard_80, Treichler_83} and was initially designed for PSK signals. The CMA principle consists of preventing the deviation of the squared modulus of the outputs at the receiver from a constant. The main advantages of CMA, among others, are its simplicity, robustness, and the fact that it can be applied even for non-constant modulus communication signals.

Many solutions to the minimization of the CM criterion have been proposed (see \cite{Veen_Chap_05} and references therein). The CM criterion was first minimized via adaptive Stochastic Gradient Algorithm (SGA) \cite{Treichler_83} and later on many variants have been devised. It is known, in adaptive filtering, that the convergence rate of the SGA is slow. To improve the latter, the authors in \cite{Chen_04} proposed an implementation of the CM criterion via the Recursive Least Squares (RLS) algorithm. The author in \cite{Agee_86} proposed to rewrite the CM criterion as a least squares problem, which is solved using an iterative algorithm named Least Squares CMA (LS-CMA). In \cite{Veen_ACMA_96}, the authors proposed an algebraic solution for the minimization of the CM criterion. The proposed algorithm is named Analytical CMA (ACMA) and consists of computing all the separators, at one time, through solving a generalized eigenvalue problem. The main advantage of ACMA is that, in the noise free case, it provides the exact solution, using only few samples (the number of samples must be greater than or equal to $M^{2}$, where $M$ is the number of transmitting antennas). Moreover, the performance study of ACMA showed that it converges asymptotically to the Wiener receiver \cite{Veen_01}. However, the main drawback of ACMA is its numerical complexity especially for a large number of transmitting antennas. An adaptive version of ACMA was also developed in \cite{Veen_Chap_05}. More generally, an abundant literature on the CM-like criteria and the different algorithms used to minimize them exists including references \cite{Veen_Chap_05, Abrar_10, Yuan_10, Lamare_10, Lamare_11}.

In this paper, we propose two algorithms to minimize the CM criterion. The first one, referred to as Givens CMA (G-CMA), performs prewhitening in order to make the channel matrix unitary then, it applies successive Givens rotations to find the resulting matrix through minimization of the CM criterion. For large number of samples, prewhitening is effective and the transformed channel matrix is very close to unitary, however, for small sample sizes, it is not, and hence results in significant performance loss. In order to compensate the effect of the ineffective prewhitening stage, we propose to use Shear rotations \cite{Fu_06, Iferr_09}. Shear rotations are non-unitary hyperbolic transformations which allow to reduce departure from normality.  We note that the authors in \cite{Fu_06, Iferr_09, Souloumiac_09, Iferr_10} used Givens and Shear rotations in the context of joint diagonalization of matrices. We thus propose a second algorithm, referred to as Hyperbolic G-CMA (HG-CMA), that uses unitary Givens rotations in conjunction with non-unitary Shear rotations. The optimal parameters of both complex Shear and Givens rotations are computed via minimization of the CM criterion. The proposed algorithms have a lower computational complexity as compared to the ACMA. Moreover, unlike the ACMA which requires a number of samples greater than the square of the number of transmitting antennas, G-CMA and HG-CMA do not impose such a condition. Finally, we propose an adaptive implementation of the HG-CMA using sliding window which has the advantages of fast convergence and good separation quality for a moderate computational cost comparable to that of the methods in \cite{Agee_86, Papadias_04, Veen_Chap_05}.

The remainder of the paper is organized as follows. Section \ref{Sec:Formulation} introduces the problem formulation and assumptions. In Sections \ref{Sec:GCMA} and \ref{Sec:HGCMA}, we introduce the G-CMA and HG-CMA, respectively. Section \ref{Sec:AHGCMA} is dedicated to the adaptive implementation of the HG-CMA. Some numerical results and discussion are provided in Section \ref{Sec:Results}, and conclusions are drawn in Section \ref{Sec:Conclusion}.
\section{Problem Formulation}
\label{Sec:Formulation}
Consider the following multiple-input multiple-output (MIMO) memoryless system model with $M$ transmit and $N$ receive antennas:
\begin{equation}\label{Eq01}
    \mathbf{y}(n)=\mathbf{x}(n)+\mathbf{b}(n)=\mathbf{A}\mathbf{s}(n)+\mathbf{b}(n)
\end{equation}
where $\mathbf{s}(n)=[s_{1}(n), s_{2}(n), \ldots, s_{M}(n)]^{T}$ is the $M\times 1$ source vector, $\mathbf{b}(n)=[b_{1}(n),b_{2}(n),\ldots,b_{N}(n)]^{T}$ is the $N\times 1$ additive noise vector, $\mathbf{A}$ represents the $N\times M$ MIMO channel matrix, and $\mathbf{y}(n)=[y_{1}(n), y_{2}(n), \ldots, y_{N}(n)]^{T}$ is the $N\times 1$ received vector.

In the sequel, we assume that the channel matrix $\mathbf{A}$ is full column rank (and hence $N\geq M$), the source signals are discrete valued (i.e., generated from a finite alphabet), zero-mean, independent and identically distributed (i.i.d.), mutually independent random processes, and the noise is additive white independent from the source signals. Note that these assumptions are quite mild and generally satisfied in communication applications.

Our main goal is to recover the source signals blindly, i.e., using only the received data. For this purpose, we need to compute an $M\times N$ separation (receiver) matrix $\mathbf{W}$ such that $\mathbf{W}\mathbf{y}(n)$ results in the source signals, i.e.
\begin{equation}\label{Eq02}
\mathbf{z}(n)=\mathbf{W}\mathbf{y}(n)=\mathbf{W}\mathbf{A}\mathbf{s}(n)+\bar{\mathbf{b}}(n)=\mathbf{G}\mathbf{s}(n)+\bar{\mathbf{b}}(n)
\end{equation}
where $\mathbf{z}(n)=[z_{1}(n),z_{2}(n), \ldots, z_{M}(n)]^{T}$ is the $M\times1$ vector of the estimated
source signals, $\mathbf{G}=\mathbf{W}\mathbf{A}$ is the $M\times M$ global system matrix and $\bar{\mathbf{b}}(n)=\mathbf{W}\mathbf{b}(n)$ is the filtered noise at the receiver output. Ideally, in BSS, matrix $\mathbf{W}$ separates the source signals except for a possible permutation and up to scalar factors\footnote{To remove these ambiguities, when necessary, side information or a short training sequence is always required.}, i.e.
\begin{equation}\label{Eq03}
    \mathbf{W} \mathbf{x}(n)=\mathbf{P} \mathbf{\Lambda}\mathbf{s}(n)
\end{equation}
where $\mathbf{P}$ is a permutation matrix and $\mathbf{\Lambda}$ is a non-singular diagonal matrix.

In the sequel, we propose to use the well known CMA to achieve the desired BSS. In other words, we propose to estimate the separation matrix by minimizing the CM criterion:
\begin{equation}\label{Eq04}
    \mathcal{J}(\mathbf{W})=\sum_{j=1}^{K}\sum_{i=1}^{M} \left(|z_{ij}|^{2}-1\right)^{2}
\end{equation}
where $z_{ij}$ is the $(i,j)$th entry of $\mathbf{Z}=\mathbf{W} \mathbf{Y}$, with $\mathbf{Y}=[\mathbf{y}(1), \mathbf{y}(2), \ldots, \mathbf{y}(K)]$ ($K$ being the sample size). This CM criterion has been used by many authors and has been shown to lead to the desired source separation for CM signals\footnote{In fact, the CMA can be used for sub-Gaussian sources (not necessary of constant modulus) as proved in \cite{Regalia_99}. 
} and large sample sizes as stated below.

\begin{theorem} If $K$ is large enough such that columns of matrix $\mathbf{S}=[\mathbf{s}(1), \mathbf{s}(2),\\ \ldots, \mathbf{s}(K)]$ include all possible combinations of source vectors\footnote{Note that this is a sufficient condition only.}
$\mathbf{s}(n)$, then the criterion $\mathcal{J}(\mathbf{W})$ (where $\mathbf{W}$ is such that $\mathbf{WA}$ is non singular) is minimized if and only if $\mathbf{W}$ satisfies:
\begin{equation}\label{Eq05}
\mathbf{W} \mathbf{A} = \mathbf{P} \mathbf{\Lambda}
\end{equation}
or, in the absence of noise:
\begin{equation}\label{Eq06}
     \mathbf{W}\mathbf{Y}=\mathbf{P}\mathbf{\Lambda}\mathbf{S}
\end{equation}
where $\mathbf{P}$ is an $M\times M$ permutation matrix and $\mathbf{\Lambda}$ is an $M\times M$ diagonal non-singular matrix.
\end{theorem}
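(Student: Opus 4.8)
The plan is to pass from the receiver $\mathbf{W}$ to the global matrix $\mathbf{G}=\mathbf{W}\mathbf{A}$ and to exploit the fact that the criterion decouples across the outputs. In the noise-free case $\mathbf{Z}=\mathbf{W}\mathbf{Y}=\mathbf{G}\mathbf{S}$, so $z_{ij}=\mathbf{g}_{i}\mathbf{s}(j)$, where $\mathbf{g}_{i}$ denotes the $i$th row of $\mathbf{G}$, and
\[
\mathcal{J}(\mathbf{W})=\sum_{i=1}^{M}\sum_{j=1}^{K}\big(|\mathbf{g}_{i}\mathbf{s}(j)|^{2}-1\big)^{2}
\]
is a sum of $M$ independent contributions, one per row. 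First I would observe that the minimum value is $0$ and is attained: choosing $\mathbf{W}=\mathbf{A}^{\dagger}$ (a left inverse, which exists since $\mathbf{A}$ has full column rank) gives $\mathbf{G}=\mathbf{I}$ and hence $|z_{ij}|=1$ for every $i,j$. Consequently, minimizing $\mathcal{J}$ is equivalent to achieving $\mathcal{J}=0$, i.e.\ to having $|z_{ij}|=1$ for all $i,j$. Under the hypothesis that the columns of $\mathbf{S}$ exhaust all source combinations, this is in turn equivalent to the statement that, for each output $i$, one has $|\mathbf{g}_{i}\mathbf{s}|^{2}=1$ for \emph{every} admissible symbol vector $\mathbf{s}\in\mathcal{A}^{M}$, where $\mathcal{A}$ is the (unit-modulus) source alphabet.

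The core of the argument is therefore the following row-wise claim, which I expect to be the main obstacle: if $\mathbf{g}=(g_{1},\ldots,g_{M})$ satisfies $|\mathbf{g}\mathbf{s}|^{2}=1$ for all $\mathbf{s}\in\mathcal{A}^{M}$, then $\mathbf{g}=\lambda\mathbf{e}_{k}$ for some index $k$ and some $\lambda$ with $|\lambda|=1$. I would prove this by reading the constant-modulus condition statistically: since $|\mathbf{g}\mathbf{s}|^{2}$ is identically $1$ over the uniform ensemble of combinations, both its mean and its second moment equal $1$. Using that the sources are zero-mean, mutually independent and of unit modulus, the first moment gives $\sum_{k}|g_{k}|^{2}=1$, while a fourth-order moment computation yields
\[
\mathrm{E}\big[|\mathbf{g}\mathbf{s}|^{4}\big]=2\Big(\sum_{k}|g_{k}|^{2}\Big)^{2}-\sum_{k}|g_{k}|^{4}.
\]
Setting this equal to $1$ and combining with $\sum_{k}|g_{k}|^{2}=1$ forces $\sum_{k}|g_{k}|^{4}=\big(\sum_{k}|g_{k}|^{2}\big)^{2}$; since for nonnegative terms $\sum_{k}|g_{k}|^{4}\le\big(\sum_{k}|g_{k}|^{2}\big)^{2}$ with equality only when at most one term is nonzero, a single $|g_{k}|^{2}$ must equal $1$ and the rest vanish, proving the claim.

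The delicate point in the moment computation is the vanishing of the cross terms that would otherwise involve the non-circular moment $\mathrm{E}[s_{k}^{2}]$; this is exactly where the finite-alphabet/constant-modulus structure of the sources (more generally, the sub-Gaussian/negative-kurtosis property, cf.\ the cited result of Regalia) enters, and it is the reason purely real constellations such as BPSK must be treated as a degenerate exception. I would therefore isolate this as a separate constellation lemma so that the remainder of the proof is constellation-agnostic.

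Finally I would reassemble the pieces. The row-wise claim shows that every row of $\mathbf{G}$ has the form $\lambda_{i}\mathbf{e}_{k_{i}}$ with $|\lambda_{i}|=1$; the assumed nonsingularity of $\mathbf{G}=\mathbf{W}\mathbf{A}$ forces the selected indices $k_{1},\ldots,k_{M}$ to be distinct, i.e.\ to form a permutation $\pi$, whence $\mathbf{G}=\mathbf{P}\mathbf{\Lambda}$ with $\mathbf{P}$ the permutation matrix of $\pi$ and $\mathbf{\Lambda}$ a diagonal nonsingular matrix (its moduli pinned to $1$ by the constellation, its phases free), which is exactly \eqref{Eq05}; substituting back through $\mathbf{Z}=\mathbf{G}\mathbf{S}$ gives \eqref{Eq06} in the noise-free case. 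The converse direction is immediate: if $\mathbf{W}\mathbf{A}=\mathbf{P}\mathbf{\Lambda}$ with unit-modulus diagonal entries, each output equals a phase-rotated source, so $|z_{ij}|=1$ and $\mathcal{J}=0$ is minimal. I would close by remarking that in the presence of noise the same characterization holds in the asymptotic (expected-criterion) sense, the additive noise contributing, to leading order, a term independent of the separating direction.
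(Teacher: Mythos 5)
Your proposal is correct, but it is not the paper's argument: the paper gives no self-contained proof at all, only the remark that the result ``can easily be derived from that of Theorem 3.2 in'' Talwar--Viberg--Paulraj (1996), which is a finite-alphabet identifiability theorem whose proof is combinatorial/algebraic, exploiting that every output must lie in the source alphabet when all symbol combinations occur among the columns of $\mathbf{S}$. You instead reduce everything to the row-wise claim that $|\mathbf{g}\mathbf{s}|^{2}=1$ for all alphabet vectors $\mathbf{s}$ forces $\mathbf{g}=\lambda\mathbf{e}_{k}$ with $|\lambda|=1$, and prove that claim by fourth-order moments: since the constant-modulus identity holds pointwise over all combinations, you may take expectations under the uniform i.i.d.\ law, and then $\mathrm{E}\left[|\mathbf{g}\mathbf{s}|^{2}\right]=\sum_{k}|g_{k}|^{2}=1$ together with $\mathrm{E}\left[|\mathbf{g}\mathbf{s}|^{4}\right]=2\left(\sum_{k}|g_{k}|^{2}\right)^{2}-\sum_{k}|g_{k}|^{4}=1$ forces $\sum_{k}|g_{k}|^{4}=\left(\sum_{k}|g_{k}|^{2}\right)^{2}$, hence a single nonzero coefficient; your pairing computation (which needs $\mathrm{E}[s_{k}^{2}]=0$ and $\mathrm{E}[|s_{k}|^{4}]=1$) is right, and the assembly into $\mathbf{G}=\mathbf{P}\mathbf{\Lambda}$ via nonsingularity, plus the trivial converse, is sound. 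This kurtosis-style route (in the spirit of the Regalia reference the paper cites in a footnote) buys a self-contained, elementary proof where the paper has only a pointer --- and the pointer actually hides work, since Talwar's theorem constrains outputs to lie \emph{in the alphabet}, a strictly stronger condition than unit modulus, so some bridging argument (e.g., a two-circles-intersect-in-at-most-two-points argument showing that, for constellations with at least three points, $a+g_{k}s_{k}$ on the unit circle for all $s_{k}$ forces either $g_{k}=0$ or $a=0$ and $|g_{k}|=1$) is still needed. Conversely, your circularity hypothesis $\mathrm{E}[s_{k}^{2}]=0$ is a genuine restriction that you rightly isolate: for BPSK the theorem as stated is in fact false under the pure CM criterion --- the nonsingular matrix $\mathbf{G}=\frac{1}{2}\left[\begin{smallmatrix}1+\jmath & 1-\jmath\\ 1-\jmath & 1+\jmath\end{smallmatrix}\right]$ keeps all four BPSK input combinations at unit output modulus yet is not of the form $\mathbf{P}\mathbf{\Lambda}$ --- so flagging that exception explicitly is a point in favor of your write-up over the paper's uncaveated statement.
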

\begin{proof}
   The proof can easily be derived from that of Theorem $3.2$ in \cite{Talwar_96}.
\end{proof}
\section{Givens CMA (G-CMA)}
\label{Sec:GCMA}
In this section, we propose a new algorithm, referred to as G-CMA, based on Givens rotations, for the minimization of the CM criterion\footnote{Part of this section's work has been presented in \cite{Ikhlef_10}.}. It is made up of two stages:
\begin{enumerate}
\item \textit{Prewhitening:} the prewhitening stage allows to convert the arbitrary channel matrix into a unitary one. Hence, this reduces finding an arbitrary separation matrix to finding a unitary one \cite{Comon_Bk}. Moreover, prewhitening has the advantage of reducing vector size (data compression) in the case where $N>M$ and avoiding trivial undesired solutions.
\item \textit{Givens rotations:} After prewhitening, the new  channel matrix is unitary and can therefore be computed via successive Givens rotations. Here, we propose to compute the optimal parameters of these rotations through minimizing the CM criterion.
\end{enumerate}

The prewhitening matrix $\mathbf{B}$ can be computed by using the classical eigendecomposition of the covariance matrix of the received signal $\mathbf{Y}$ (often, it is computed as the inverse square root of the data covariance matrix, $\frac{1}{K}\mathbf{Y}\mathbf{Y}^{H}$ \cite{Comon_Bk}). The whitened signal can then be written as:
\begin{equation}\label{Eq07}
    \bar{\mathbf{Y}}=\mathbf{B}\mathbf{Y}
\end{equation}
Therefore, assuming the noise free case and that the prewhitening matrix $\mathbf{B}$ is computed using the exact covariance matrix, we have:
\begin{equation}\label{Eq08}
    \bar{\mathbf{Y}}=\mathbf{B}\mathbf{A}\mathbf{S}=\mathbf{V}^{H}\mathbf{S}
\end{equation}
where $\mathbf{V}=\mathbf{A}^{H}\mathbf{B}^{H}$ is an $M\times M$ unitary matrix. From (\ref{Eq08}), it is clear that, in order to find the source signals, it is sufficient to find the unitary matrix $\mathbf{V}$ and hence the separator can simply be expressed as: $\mathbf{W} = \mathbf{V} \mathbf{B}$, which, in the absence of noise, results in $\mathbf{Z} = \mathbf{W} \mathbf{Y} = \mathbf{V} \mathbf{B} \mathbf{Y} = \mathbf{V} \bar{\mathbf{Y}} = \mathbf{V} \mathbf{V}^{H}\mathbf{S} = \mathbf{S}$.

Now, to minimize the CM criterion in (\ref{Eq04}) w.r.t. to matrix $\mathbf{V}$, we propose an iterative algorithm where $\mathbf{V}$ is rewritten using Givens rotations. Indeed, in Jacobi-like algorithms \cite{Golub_Bk_96}, the unitary matrix $\mathbf{V}$ can be decomposed into product of elementary complex Givens rotations $\mathbf{\Psi}_{pq}$ such that:
\begin{equation}\label{Eq09}
    \mathbf{V}=\prod_{N_{Sweeps}}~\prod_{1\leq p <q\leq M}\mathbf{\Psi}_{pq}
\end{equation}
where $N_{Sweeps}$ refers to the number of sweeps (iterations\footnote{In this paper we will use the terms \textit{iteration} and \textit{sweep} interchangeably.}) and the Givens rotation matrix $\mathbf{\Psi}_{pq}$ is a unitary matrix where all diagonal elements are one except for two elements $\psi_{pp}$ and $\psi_{qq}$. Likewise, all off-diagonal elements of $\mathbf{\Psi}_{pq}$ are zero except for two elements $\psi_{pq}$ and $\psi_{qp}$. Elements $\psi_{pp}, \psi_{pq}, \psi_{qp}$, and $\psi_{qq}$ are given by:
\begin{eqnarray} \label{Eq10}
\left[\begin{array}{cc}\psi_{pp} & \psi_{pq} \\ \psi_{qp} & \psi_{qq} \end{array} \right]
&=& \left[\begin{array}{cc} \cos (\theta) & e^{\jmath \alpha}  \sin(\theta)
\\-e^{-\jmath \alpha}  \sin(\theta) & \cos(\theta)\end{array}\right]
\end{eqnarray}

To compute $\mathbf{\Psi}_{pq}$, we need to find only the rotation angles $(\theta,\alpha)$. The idea here is to choose the rotation angles $(\theta,\alpha)$ such that the CM criterion $\mathcal{J}(\mathbf{V})$ is minimized. For this purpose, let us consider the unitary transformation\footnote{For simplicity, we keep using notation $\bar{\mathbf{Y}}$ even though the latter matrix is transformed at each iteration of the proposed algorithm.} $\breve{\mathbf{Y}}=\mathbf{\Psi}_{pq}\bar{\mathbf{Y}}$. Given the structure of $\mathbf{\Psi}_{pq}$, this unitary transformation changes only the elements in rows $p$ and $q$ of $\bar{\mathbf{Y}}$ according to:
\begin{equation}\label{Eq11}
            \breve{y}_{pj}=\cos (\theta)\bar{y}_{pj}+e^{\jmath \alpha}  \sin(\theta)\bar{y}_{qj} \mbox{  and  }
            \breve{y}_{qj}=-e^{-\jmath \alpha}  \sin(\theta)\bar{y}_{pj}+\cos (\theta)\bar{y}_{qj}
\end{equation}
where $\bar{y}_{ij}$ refers to the $(i,j)$th entry of $\bar{\mathbf{Y}}$.

The algorithm consists of minimizing iteratively the criterion in (\ref{Eq04}) by applying  successive Givens rotations, with initialization of $\mathbf{V}=\mathbf{I}$. $\mathbf{\Psi}_{pq}$ are computed such that $\mathcal{J}(\mathbf{\Psi}_{pq})$ is minimized at each iteration. In order to minimize $\mathcal{J}(\mathbf{\Psi}_{pq})$, we propose to express it as a function of $(\theta,\alpha)$. Since the application of Givens rotation matrix $\mathbf{\Psi}_{pq}$ to $\bar{\mathbf{Y}}$ modifies only the two rows $p$ and $q$, the terms that depend on $(\theta,\alpha)$ are those corresponding to $i=p$ or $i=q$ in (\ref{Eq04}). Considering (\ref{Eq10}) and
(\ref{Eq11}), we have:
\begin{eqnarray}\label{Eq12}
        \begin{array}{l}
    \mathcal{J}(\mathbf{\Psi}_{pq})=\sum_{j=1}^{K}\left[\big(|\breve{y}_{pj}|^{2}
    -1\big)^{2}+\big(|\breve{y}_{qj}|^{2}-1\big)^{2}\right] +\sum_{j=1}^{K}\sum_{i=1, i\neq p,q}^{M}
    \big(|\bar{y}_{ij}|^{2}-1\big)^{2}
            \end{array}
\end{eqnarray}
On the other hand, by considering (\ref{Eq11}) and the following equalities:
\begin{eqnarray}\label{Eq13}
	\begin{array}{l}
  		\cos^{2}(\theta) = \frac{1}{2}(1+\cos(2\theta)), 
  		\sin^{2}(\theta)=\frac{1}{2}(1-\cos(2\theta)), 
		\sin(2\theta) = 2\sin(\theta)\cos(\theta)
  	\end{array}
\end{eqnarray}
and after some manipulations, we obtain:
\begin{eqnarray}\label{Eq14}
\begin{split}
    |\breve{y}_{pj}|^{2} = \mathbf{t}_{j}^{T}\mathbf{v}+\frac{1}{2}\big(|\bar{y}_{pj}|^{2} +|\bar{y}_{qj}|^{2}\big)
    \mbox{ and }
    |\breve{y}_{qj}|^{2} =-\mathbf{t}_{j}^{T}\mathbf{v}+\frac{1}{2}\big(|\bar{y}_{pj}|^{2} +|\bar{y}_{qj}|^{2}\big)
\end{split}
\end{eqnarray}
with:
\begin{eqnarray}\label{Eq16}
    &&\mathbf{v}=[\cos(2\theta),~\sin(2\theta)\cos(\alpha),~\sin(2\theta)\sin(\alpha)]^{T}\label{Eq16}\\
    &&\mathbf{t}_{j}=\Big[\frac{1}{2}\big(|\bar{y}_{pj}|^{2}-|\bar{y}_{qj}|^{2}\big),~
     \Re(\bar{y}_{pj}\bar{y}_{qj}^{*}),~\Im(\bar{y}_{pj}\bar{y}_{qj}^{*})\Big]^{T}\label{Eq17}
\end{eqnarray}
where $\Re(a)$ and $\Im(a)$ denote real and imaginary parts of $a$, respectively. Using (\ref{Eq14}), we get:
\begin{align}\label{Eq18}
    \big(|\breve{y}_{pj}|^{2}-1\big)^{2}&+\big(|\breve{y}_{qj}|^{2}-1\big)^{2} = 2\mathbf{v}^{T} \mathbf{t}_{j} \mathbf{t}_{j}^{T}\mathbf{v}+2\left(\frac{|\bar{y}_{pj}|^{2} + \bar{y}_{qj}|^{2}}{2}-1\right)^{2}
\end{align}
Then, plugging  (\ref{Eq18}) into (\ref{Eq12}) yields:
\begin{eqnarray}\label{Eq19}
    \mathcal{J}(\mathbf{\Psi}_{pq}) &=& 2\sum_{j=1}^{K} \mathbf{v}^{T}\mathbf{t}_{j}\mathbf{t}_{j}^{T}\mathbf{v}
    +2\sum_{j=1}^{K}\left(\frac{|\bar{y}_{pj}|^{2}+|\bar{y}_{qj}|^{2}}{2}-1\right)^{2} \nonumber \\ 
    &+& \sum_{j=1}^{K}\sum_{i=1 \atop i\neq p,q}^{M} \big(|\bar{y}_{ij}|^{2}-1\big)^{2}
\end{eqnarray}
Given that the second and third summations in (\ref{Eq19}) do not depend on $(\theta,\alpha)$, the minimization problem is equivalent to the minimization of:
\begin{equation}\label{Eq20}
    \mathcal{F}(\mathbf{\Psi}_{pq})=\mathbf{v}^{T}\mathbf{T}\mathbf{v}
\end{equation}
where $\mathbf{T}=\sum_{j=1}^{K}\mathbf{t}_{j}\mathbf{t}_{j}^{T}$ and $\|\mathbf{v}\|=1$. Finally, the solution $\mathbf{v}$ that minimizes (\ref{Eq20}) is given by the unit norm eigenvector of $\mathbf{T}$ corresponding to the smallest eigenvalue\footnote{This is a $3\times3$ eigenvalue problem that can be solved explicitly.}. Given $\mathbf{v}=[v_{1},v_{2},v_{3}]^T$ we have:
\begin{eqnarray}\label{Eq21}
\begin{split}
\cos(\theta)=\sqrt{\frac{1+v_{1}}{2}}\mbox{ and } e^{\jmath \alpha}  \sin(\theta)=\frac{v_{2}+\jmath v_{3}}{\sqrt{2(1+v_{1})}}
\end{split}
\end{eqnarray}
Using (\ref{Eq21}), the computation of $\mathbf{\Psi}_{pq}$ follows directly from (\ref{Eq10}). The G-CMA algorithm is summarized in Table \ref{Tab:GCMA} (for simplicity, we use the same notation for the data and its transformed version).

\begin{table}[tb]
\renewcommand{\arraystretch}{1.5}
\centering
\begin{tabular}{l}
\hline Initialization: $\mathbf{V}=\mathbf{I}$\\
1.~~Prewhitening: $\bar{\mathbf{Y}}=\mathbf{B}\mathbf{Y}$, where $\mathbf{B}$ is the prewhitening matrix.\\
2.~~Complex Givens rotations:\\
~~~~~~~~\textbf{for} $i=1:N_{Sweeps}$\\
~~~~~~~~~~~~~~\textbf{for} $p=1:M-1$\\
~~~~~~~~~~~~~~~~~~~~\textbf{for} $q=p+1:M$\\
~~~~~~~~~~~~~~~~~~~~~~~~~~Compute $\mathbf{\Psi}_{pq}$ using (\ref{Eq21})\\
~~~~~~~~~~~~~~~~~~~~~~~~~~$\bar{\mathbf{Y}}=\mathbf{\Psi}_{pq}\bar{\mathbf{Y}}$\\
~~~~~~~~~~~~~~~~~~~~~~~~~~$\mathbf{V}=\mathbf{\Psi}_{pq}\mathbf{V}$\\
~~~~~~~~~~~~~~~~~~~~\textbf{end for}\\
~~~~~~~~~~~~~~\textbf{end for}\\
~~~~~~~~\textbf{end for}\\
3.~~After convergence, computation of the separation matrix: $\mathbf{W}=\mathbf{V}\mathbf{B}$\\
4.~~Separation: $\hat{\mathbf{S}}=\mathbf{W}\mathbf{Y}=\bar{\mathbf{Y}}$.\\
\hline
\end{tabular}
\caption{The Givens CMA (G-CMA) algorithm.} \label{Tab:GCMA}
\end{table}
The G-CMA algorithm described above requires that the number of samples available at the receiver is large enough so that the prewhitening step results in an equivalent channel matrix close to unitary, for which the use of Givens rotations is effective. However, for small numbers of samples, prewhitening may result in an equivalent channel matrix not close to unitary, in which case, applying G-CMA alone is ineffective. Next, we propose to solve this problem by introducing the Hyperbolic Givens rotations.
\section{Hyperbolic Givens CMA (HG-CMA)}
\label{Sec:HGCMA}
As stated in the previous section, the use of Givens rotations in the case of small numbers of samples is not effective. To overcome this limitation, we introduce here the use of Hyperbolic Givens rotations. The latter consist of applying Shear rotations and Givens rotations alternatively. Matrix $\mathbf{W}$ can be decomposed into product of elementary complex Shear rotations, Givens rotations and normalization transformation as follows:
\begin{equation} \label{Eq23}
  \mathbf{W}= \prod_{N_{Sweeps}}~~\prod_{1\leq p<q \leq M} \mathbf{D}_{pq}~\mathbf{\Psi}_{pq}~\mathbf{H}_{pq}
\end{equation}
where $\mathbf{D}_{pq}$, $\mathbf{\Psi}_{pq}$ and $\mathbf{H}_{pq}$ denote normalization, unitary Givens and non-unitary Shear transformations, respectively. The unitary matrix $\mathbf{\Psi}_{pq}$ is defined in (\ref{Eq10}). Similar to $\mathbf{\Psi}_{pq}$, $\mathbf{H}_{pq}$ is equal to the identity matrix except for the elements $h_{pp}, h_{pq}, h_{qp}$ and $h_{qq}$ that are given by:
\begin{eqnarray} \label{Eq24}
\left[\begin{array}{cc}h_{pp} & h_{pq} \\ h_{qp} & h_{qq} \end{array} \right] &=&
\left[\begin{array}{cc}\cosh (\gamma) & e^{\jmath \beta} \sinh(\gamma)
\\e^{-\jmath \beta}  \sinh(\gamma) & \cosh(\gamma)\end{array}\right]
\end{eqnarray}
where $ \gamma \in \mathbb{R}$ is the hyperbolic transformation parameter and $ \beta \in [-\frac{\pi}{2} , \frac{\pi}{2}]$ is an angle parameter (equal to zero in the real case).
The normalization transformation $\mathbf{D}_{pq}=\mathbf{D}_{pq}(\lambda_{p},\lambda_{q})$ is a diagonal matrix with diagonal elements  equal to one except for the two elements $d_{pp}=\lambda_p$, and $d_{qq}=\lambda_q$.

In the following derivation, we consider the square case where $N=M$ (if $N > M$, one can use signal subspace projection as in \cite{Veen_Chap_05}).
\subsection{Non-Unitary Shear Rotations}
\label{Sub:Hyperbolic}
By applying $\mathbf{H}_{pq}$ to the received signal, we get:
\begin{equation} \label{Eq25}
   \tilde{\mathbf{Y}} = \mathbf{H}_{pq}~\mathbf{Y}
\end{equation}

From (\ref{Eq24}), only the $p$th and $q$th rows of $\mathbf{Y}$ are affected according to:
\begin{eqnarray}\label{Eq26}
\begin{split}
\tilde{y}_{pj}&=& \cosh(\gamma) {y_{pj}} + e^{\jmath \beta} \sinh(\gamma) {y_{qj}} \mbox{ and }
\tilde{y}_{qj}&=&  e^{-\jmath \beta} \sinh(\gamma) {y_{pj}} + \cosh(\gamma) {y_{qj}}
\end{split}
\end{eqnarray}
In order to compute $\mathbf{H}_{pq}$, we propose to minimize the CM cost function in (\ref{Eq04}) w.r.t. $\mathbf{H}_{pq}$:
\begin{eqnarray}\label{Eq27}
 \mathcal{J}(\mathbf{H}_{pq})= \sum_{j=1}^{K} (|\tilde{y}_{pj}|^2-1)^2 + (|\tilde{y}_{qj}|^2-1)^2 + \sum_{j=1}^{K} \sum_{i=1 \atop i\neq p,q}^{M}({|\bar{y}_{ij}|^2}-1)^2
\end{eqnarray}
By considering (\ref{Eq26}) and the following equalities: 
\begin{eqnarray}\label{Eq28}
\begin{split}
\sinh(2 \gamma) = 2 \sinh(\gamma) \cosh(\gamma), 
\cosh^2(\gamma) = \frac{1}{2} (\cosh(2 \gamma) + 1), 
\sinh^2(\gamma) = \frac{1}{2} (\cosh(2 \gamma) - 1)
\end{split}
\end{eqnarray}
and after some straightforward derivations, we obtain:
\begin{eqnarray}\label{Eq29}
      |\tilde{y}_{pj}|^2 =  \mathbf{r}_{j}^T \mathbf{u} + \frac {1}{2} (|y_{pj}|^2 - |{y_{qj}|^2}) \mbox{ and }
      |\tilde{y}_{qj}|^2 =  \mathbf{r}_{j}^T \mathbf{u} - \frac {1}{2} (|y_{pj}|^2 - |{y_{qj}|^2})
\end{eqnarray}
with:
\begin{eqnarray}
&&\mathbf{u}=\left[\cosh(2\gamma),\;\;\cos(\beta)\;\sinh(2\gamma),\;\;\sin(\beta)\;\sinh(2\gamma)\right]^T \label{Eq30} \\ 
&&\mathbf{r}_{j}=\left[\frac{1}{2}\left(|y_{pj}|^2+|{y_{qj}|^2}\right),\;\;\Re\left(y_{pj}y_{qj}^*\right),\;\;\Im\left(y_{pj}y_{qj}^*\right)\right]^T\label{Eq31}
\end{eqnarray}
Using the results in (\ref{Eq29}), we can rewrite the first two terms in (\ref{Eq27}) as:
\begin{eqnarray}\label{Eq32}
\left(|\tilde{y}_{pj}|^2-1\right)^2 + \left(|\tilde{y}_{qj}|^2-1\right)^2 &=& 2 \mathbf{u}^T \mathbf{r}_{j} \mathbf{r}_{j}^T \mathbf{u} - 4 \mathbf{u}^T \mathbf{r}_{j} \nonumber \\
&+& \frac {1}{2}\left(|\bar{y}_{pj}|^2 - |{\bar{y}_{qj}|^2}\right)^2 + 2
\end{eqnarray}
Then, by substituting (\ref{Eq32}) into (\ref{Eq27}), we obtain:
\begin{eqnarray} \label{Eq33}
\mathcal{J}(\mathbf{u}) = 2\left(\sum_{j=1}^{K} \mathbf{u}^T\mathbf{r}_{j}\mathbf{r}_{j}^T\mathbf{u}-2 \mathbf{u}^T \mathbf{r}_{j}\right) &+& 2 \sum_{j=1}^{K} \big[\frac {1}{4}({|\bar{y}_{pj}|^2} - {|\bar{y}_{qj}|^2})^2 + 1\big] \nonumber \\ &+& \sum_{j=1}^{K} \sum_{i=1 \atop i\neq p,q}^{M}({|\bar{y}_{ij}|^2}-1)^2
\end{eqnarray}
We note that only the first term on the right hand side of the equality (\ref{Eq33}) depends on $(\gamma,\beta)$, and hence the minimization of (\ref{Eq33}) is equivalent to the minimization of:
\begin{equation}\label{Eq34}
\mathcal{F}(\mathbf{u})=\sum_{j=1}^{K}\mathbf{u}^T\mathbf{r}_{j}\mathbf{r}_{j}^T\mathbf{u}-2\mathbf{u}^T \mathbf{r}_{j}
\end{equation}
This optimisation problem can be achieved in three different ways: by computing the exact solution, by taking linear approximation to zero, and with semi linear approximation.
\subsubsection{Exact Solution}
\label{Sub:ExactSol}
In this approach, we compute the optimum solution using the Lagrange multiplier method. The optimization problem can be expressed as:
\begin{eqnarray} \label{Eq35}
    \min_{\mathbf{u}}~~\mathcal{F}(\mathbf{u})~~~\textrm{s.t.}~~~\mathbf{u}^T \mathbf{J}_{3}\mathbf{u} = 1
\end{eqnarray}
where $\mathbf{J}_{3} = \mbox{diag}\left(\left[1,-1,~-1\right]\right)$ so that constraint is equivalent to $\cosh^2(2\gamma)-\sinh^2(2\gamma)=1$. The Lagrangian of the optimization problem in (\ref{Eq35}) can be written as:
\begin{equation} \label{Eq36}
\mathcal{L}(\mathbf{u},\lambda) = \mathbf{u}^T \mathbf{R} \mathbf{u}-2 \mathbf{r}^T \mathbf{u}+\lambda(\mathbf{u}^T \mathbf{J}_{3} \mathbf{u}-1)
\end{equation}
where $\mathbf{R}=\sum_{j=1}^K\mathbf{r}_{j}\mathbf{r}_{j}^T$ is a $(3\times3)$ symmetric matrix, $\mathbf{r}=\sum_{j=1}^K\mathbf{r}_{j}$, $\mathbf{u}$ and $\mathbf{r}_{j}$ are defined in (\ref{Eq30}) and (\ref{Eq31}), respectively. The solution that minimizes the Lagrangian in (\ref{Eq36}) can be expressed as:
\begin{equation}\label{Eq38}
    \mathbf{u}=(\mathbf{R}+\lambda \mathbf{J}_{3})^{-1} \mathbf{r}
\end{equation}
where $\lambda$ is the solution of:
\begin{equation}\label{Eq39}
\mathbf{u}^T\mathbf{J}_{3}\mathbf{u}=1\Longleftrightarrow \mathbf{r}^T(\mathbf{R}+\lambda \mathbf{J}_{3})^{-1} \mathbf{J}_{3} (\mathbf{R}+\lambda \mathbf{J}_{3})^{-1} \mathbf{r}=1
\end{equation}
which is a $6$-th order polynomial equation (see appendix A) of the form: $P_6(\lambda)=c_0\lambda^6+c_1\lambda^5+c_2\lambda^4+c_3\lambda^3+c_4\lambda^2+c_5\lambda+c_6=0$.
The desired solution $\lambda$ is the real-valued root of the above polynomial that corresponds to the minimum value of (\ref{Eq36}). Finally, given the solution $\mathbf{u}=[u_1\;u_2\;u_3]^T$ in (\ref{Eq38}), the Shear transformation entries are computed as:
\begin{equation} \label{Eq40}
h_{pp} = h_{qq} =  \sqrt{\frac{u_1 +1}{2}}\mbox{ and } h_{pq} = h_{qp}^* = \frac{(u_2+\jmath u_3)}{2 h_{pp}}
\end{equation}

Note that, for the computation of each Shear rotation matrix, we need to perform a $3 \times 3$ matrix inversion and solve a $6$-th order polynomial equation. Hence, as the number of sweeps and transmit antennas increases, the complexity increases. In the following, we present two suboptimal solutions that have less complexity and close performance compared to the exact one.
\subsubsection{Semi-Exact Solution}
\label{Sub:SemiExactSol}
We denote this approach by \textit{semi-exact solution}, since for computing $\beta$ we take the approximation in (\ref{Eq41}), while for the angle rotation $\gamma$ we compute an exact solution using the Lagrange multiplier method. By considering the first order approximation around zero of $\sinh$ and $\cosh$, we have:
\begin{eqnarray}\label{Eq41}
\sinh(2\gamma) \approx 2\sinh(\gamma) \approx 2\gamma \mbox{ and } \cosh(2\gamma) \approx \cosh(\gamma) \approx 1
\end{eqnarray}
Using (\ref{Eq41}) in (\ref{Eq30}), equation (\ref{Eq34}) can be expressed as:
\begin{eqnarray}\label{Eq42}
\mathcal{F}(\gamma,\beta)=\sum_{j=1}^{K} r_{j}^{(1)} \left(r_{j}^{(1)}-2\right) &+& 4\gamma \left[\cos(\beta) r_{j}^{(2)}\left(r_{j}^{(1)}-1\right)+\sin(\beta) r_{j}^{(3)}\left(r_{j}^{(1)}-1\right)\right] \nonumber \\ 
&+& 4\gamma^2\left(\cos(\beta)r_{j}^{(2)}+\sin(\beta)r_{j}^{(3)}\right)^2
\end{eqnarray}
where $r_{j}^{(i)}$ is the $i$th element of $\mathbf{r}_{j}$. The linear approximation of (\ref{Eq42}) for $\gamma$ close to zero (which corresponds to simply neglecting the terms involving $\gamma^n$ for $n\geq2$) can be obtained by discarding the last term of (\ref{Eq42}):
\begin{equation}\label{Eq43}
\begin{array}{l}
\mathcal{F}(\gamma,\beta)\approx \sum_{j=1}^{K} r_{j}^{(1)}\left(r_{j}^{(1)}-2\right) + 4\gamma\left[\cos(\beta)r_{j}^{(2)} \left(r_{j}^{(1)}-1\right)+\sin(\beta)r_{j}^{(3)}\left(r_{j}^{(1)}-1\right)\right]
\end{array}
\end{equation}
The minimization of (\ref{Eq43}) obtained by zeroing its derivative) leads to:
\begin{equation}\label{Eq44}
\begin{array}{l}
\beta = \mathrm{arctan} \left(\frac{\sum_{j=1}^K r_{j}^{(3)}\; \left(r_{j}^{(1)}-1\right)}{\sum_{j=1}^K r_{j}^{(2)}\; \left(r_{j}^{(1)}-1\right)} \right)
\end{array}
\end{equation}
Once we have $\beta$, let us define:
\begin{eqnarray}
&&\tilde{\mathbf{u}} =[\cosh(2\gamma),\;\sinh(2\gamma)]^T\label{Eq45}\\ 
&&\tilde{\mathbf{r}}_{j}=\left[\frac{1}{2}\left(|y_{pj}|^2+|{y_{qj}|^2}\right),~\cos(\beta) \Re(y_{pj}y_{qj}^*) + \sin(\beta) \Im(y_{pj}y_{qj}^*) \right]^T\label{Eq46}
\end{eqnarray}
and hence, finding $\gamma$ which minimizes (\ref{Eq34}) implies solving the following optimization problem:
\begin{equation} \label{Eq47}
\min_{\tilde{\mathbf{u}}}~~\mathcal{K}(\tilde{\mathbf{u}})~~~\textrm{s.t.}~~~\tilde{\mathbf{u}}^T\mathbf{J}_{2}\tilde{\mathbf{u}} = 1
\end{equation}
where $\mathbf{J}_{2} = \mbox{diag}\left([1,~-1]\right)$ and:
\begin{eqnarray}\label{Eq48}
\begin{array}{l}
\mathcal{K}(\tilde{\mathbf{u}}) = \sum_{j=1}^{K}\tilde{\mathbf{u}}^T \tilde{\mathbf{r}}_{j}\tilde{\mathbf{r}}_{j}^T \tilde{\mathbf{u}}-2 \tilde{\mathbf{u}}^T \tilde{\mathbf{r}}_{j}
\end{array}
\end{eqnarray} 
By defining $\tilde{\mathbf{R}}=\sum_{j=1}^{K}\tilde{\mathbf{r}}_{j}\tilde{\mathbf{r}}_{j}^T$ and $\tilde{\mathbf{r}}=\sum_{j=1}^K\tilde{\mathbf{r}}_{j}$, the optimization of (\ref{Eq47}) using Lagrange multiplier leads to:
\begin{equation}\label{Eq50}
    \tilde{\mathbf{u}}=(\tilde{\mathbf{R}}+\lambda \mathbf{J}_{2})^{-1} \tilde{\mathbf{r}}
\end{equation}
where $\lambda$ is the solution of:
\begin{equation}\label{Eq51}
    \tilde{\mathbf{u}}^T \mathbf{J}_{2} \tilde{\mathbf{u}} = 1 \Longleftrightarrow \tilde{\mathbf{r}}^T (\tilde{\mathbf{R}}+\lambda \mathbf{J}_{2})^{-1}\mathbf{J}_{2}(\tilde{\mathbf{R}}+\lambda \mathbf{J})^{-1}\tilde{\mathbf{r}} = 1
\end{equation}
This is a $4$-th order polynomial equation (see appendix A) of the form: $P_4(\lambda)=c_0\lambda^4+c_1\lambda^3+c_2\lambda^2+c_3\lambda+c_4= 0$. The desired solution $\lambda$ is the real-valued root of the above polynomial that corresponds to the minimum value of (\ref{Eq48}). Finally, given the solution $\tilde{\mathbf{u}}=[\tilde{u}_1\;\tilde{u}_2]^T$ in (\ref{Eq50}) and $\beta$ in (\ref{Eq44}), the Shear transformation entries can be obtained as:
\begin{eqnarray}\label{Eq52}
        h_{pp} & = h_{qq} =  \sqrt{\frac{1}{2}(\tilde{u}_1 +1)}\mbox{ and }
        h_{pq} & = h_{qp}^* = e^{\jmath \beta}\frac{\tilde{u}_2}{2 h_{pp}}
\end{eqnarray}
We note that in this solution, for the computation of each Shear rotation matrix, we need to solve a $4$-th order polynomial equation. Hence, the complexity of this solution is clearly less than that of the exact one.
\subsubsection{Solution with Linear Approximation to Zero}
\label{Sub:ApproSol}
In this approach, we compute $\beta$ as in (\ref{Eq44}) and then we compute $\gamma$ which minimizes (\ref{Eq48}) by considering the approximation in (\ref{Eq41}). We define:
\begin{equation} \label{Eq53}
\tilde{\mathbf{R}} = \left[\begin{array}{cc}\tilde{r}_{11} &\tilde{r}_{12}\\ \tilde{r}_{21} &\tilde{r}_{22} \end{array}\right] ~~ \mathrm{and} ~~\tilde{\mathbf{r}} = \left[\begin{array}{cc}\tilde{r}_{1}\\ \tilde{r}_{2}\end{array}\right]
\end{equation}
and using (\ref{Eq26}), (\ref{Eq48}) can be written as:
\begin{equation} \label{Eq54}
    \mathcal{K}(\gamma)=\frac{1}{2}(\tilde{r}_{11}+ \tilde{r}_{22})\cosh(4\gamma)+\tilde{r}_{12} \sinh(4\gamma)-2 \tilde{r}_{1} \cosh(2\gamma)-2 \tilde{r}_{2}\sinh(2\gamma)
\end{equation}
By taking the first derivative of (\ref{Eq54}) with respect to $\gamma$, using (\ref{Eq41}), and setting the result equal to zero, we obtain:
\begin{equation} \label{Eq55}
    \sinh(2\gamma) (\tilde{r}_{11}+\tilde{r}_{22}-\tilde{r}_{1}) + \cosh(2\gamma)(\tilde{r}_{12}-\tilde{r}_{2}) = 0
\end{equation}
Which solution is:
\begin{eqnarray}\label{Eq56}
\begin{array}{l}
\gamma =\frac{1}{2}\mathrm{arctanh}\left(\frac{\sum_{j=1}^K \left[\left(\cos(\beta) r_{j}^{(2)}+\sin(\beta) r_{j}^{(3)}\right) \left(1-r_{j}^{(1)}\right)\right]}{\sum_{j=1}^K \left[\left((r_{j}^{(1)})^2-r_{j}^{(1)}\right)+\left(\cos(\beta) r_{j}^{(2)}+\sin(\beta) r_{j}^{(3)}\right)^2\right]} \right)
\end{array}
\end{eqnarray}
Given $\beta$ in (\ref{Eq44}) and $\gamma$ in (\ref{Eq56}), the computation of $\mathbf{H}_{pq}$ follows directly. This solution has the lowest complexity among the three considered ones.
\subsection{Unitary Givens Rotation}
\label{Sub:Givens}
After the Shear transformation, we now apply the Givens transformation to the result of the Shear rotation as:
\begin{equation} \label{Eq57}
\mathbf{\underbar{Y}} = \mathbf{\Psi}_{pq} \tilde{\mathbf{Y}}
\end{equation}
The unitary matrix $\mathbf{\Psi}_{pq}$ is computed in the same way as in Section \ref{Sec:GCMA}. 
\subsection{Normalization Rotations}
\label{Sub:HGCMA_Norm}


The last algorithm's transform is a normalization step. In our CM criterion in (\ref{Eq04}), we have set the constant equal to one while in the original CM criterion it is chosen equal to $C_i = E[\absF{s_i}]/E[\absT{s_i}]$. Somehow, this normalization step is introduced to compensate for this constant choice (the value  of $C_i$ is supposed unknown in a blind context). 

It has been shown in the two previous subsection that both Givens and hyperbolic transformations affect only the rows of indices $p$ and $q$ of the data bloc $\mathbf{\underbar{Y}}$ which means that only these two rows need to be normalized:
\begin{equation}\label{Eq58}
    \mathbf{Z}=\mathbf{D}_{(pq)}(\lambda_p,\lambda_q)~\mathbf{\underbar{Y}}
\end{equation}

The optimal parameters $(\lambda_p,\lambda_q)$ are calculated so that they minimize the CM criterion in (\ref{Eq04}) w.r.t. $\mathbf{D}_{(pq)}(\lambda_p,\lambda_q)$. The CM criterion is expressed in this case as (constant terms are omitted):
\begin{equation} \label{Eq59}	
\mathcal{J}_{D}(\lambda_p,\lambda_q)=\sum_{j=1}^{K}(\lambda_p^4\absF{\underbar{y}_{pj}}-2\lambda_p^2\absT{\underbar{y}_{pj}}) + \sum_{j=1}^{K}(\lambda_q^4\absF{\underbar{y}_{qj}}-2\lambda_q^2\absT{\underbar{y}_{qj}})
\end{equation}
Optimal normalization parameters can be obtained at the zeros of the derivatives of (\ref{Eq59}) with respect to these two parameters as follows:
\begin{eqnarray} \label{Eq60}
\begin{array}{lll}
\lambda_p = \sqrt{\sum_{j=1}^{K}\absT{\underbar{y}_{pj}} / \sum_{j=1}^{K}\absF{\underbar{y}_{pj}}}
~~\mbox{ and } \lambda_q = \sqrt{\sum_{j=1}^{K}\absT{\underbar{y}_{qj}} / \sum_{j=1}^{K}\absF{\underbar{y}_{qj}}} 
\end{array}
\end{eqnarray}
The HG-CMA algorithm is summarized in Table \ref{Tab:HGCMA}.

\begin{table}[tb]
\renewcommand{\arraystretch}{1.5}
\centering
\begin{tabular}{l}
\hline Initialization: $\mathbf{W}=\mathbf{I}$\\
Signal subspace projection if $N>M$ \\
\textbf{for} $i=1:N_{Sweeps}$\\
~~~~~~\textbf{for} $p=1:M-1$\\
~~~~~~~~~~~~\textbf{for} $q=p+1:M$\\
~~~~~~~~~~~~~~~~~~Compute $\mathbf{H}_{pq}$:\\
~~~~~~~~~~~~~~~~~~~~~~- using (\ref{Eq40}) for exact solution\\
~~~~~~~~~~~~~~~~~~~~~~- using (\ref{Eq44}) and (\ref{Eq52}) for semi exact solution\\
~~~~~~~~~~~~~~~~~~~~~~- using (\ref{Eq44}) and (\ref{Eq56}) for linear approximation to zero (preferred)\\
~~~~~~~~~~~~~~~~~~$\mathbf{Y}=\mathbf{H}_{pq}\mathbf{Y}$\\
~~~~~~~~~~~~~~~~~~$\mathbf{W}=\mathbf{H}_{pq}\mathbf{W}$\\
~~~~~~~~~~~~~~~~~~Compute $\mathbf{\Psi}_{pq}$ using (\ref{Eq21})\\
~~~~~~~~~~~~~~~~~~$\mathbf{Y}=\mathbf{\Psi}_{pq}\mathbf{Y}$\\
~~~~~~~~~~~~~~~~~~$\mathbf{W}=\mathbf{\Psi}_{pq}\mathbf{W}$\\
~~~~~~~~~~~~~~~~~~Compute $\mathbf{D}_{pq}$ using (\ref{Eq60})\\
~~~~~~~~~~~~~~~~~~$\mathbf{Y}=\mathbf{D}_{pq}\mathbf{Y}$\\
~~~~~~~~~~~~~~~~~~$\mathbf{W}=\mathbf{D}_{pq}\mathbf{W}$\\
~~~~~~~~~~~~\textbf{end for}\\
~~~~~~\textbf{end for}\\
\textbf{end for}\\
Separation: $\hat{\mathbf{S}}=\mathbf{W}\mathbf{Y}=\mathbf{Y}$.\\
\hline
\end{tabular}
\caption{The Hyperbolic Givens CMA (HG-CMA) algorithm.} \label{Tab:HGCMA}
\end{table}
\section{Adaptive HG-CMA}
\label{Sec:AHGCMA}
To make an adaptive version of the HG-CMA algorithm, let us consider a sliding bloc of size $K$, $\mathbf{Y}^{(t-1)}=\left[\mathbf{y}(t-K),...,\mathbf{y}(t-2),\mathbf{y}(t-1)\right]$ which is updated at each new acquisition of a new sample $\mathbf{y}(t)$ (at time instant $t$). The main idea of the adaptive HG-CMA is to apply only one sweep of complex rotations on the sliding window at each time instant and update the separation matrix $\mathbf{W}$ by this sweep of rotations.

The numerical cost of the HG-CMA is of order $O(KM^2)$ (assuming $K >M$) but can be reduced to $O(KM)$ flops per iteration if we use only one or two rotations per time instant. In the simulation experiments, we compare the performance of the algorithm in the 3 following cases:
\begin{itemize}
\item When we use one complete sweep (i.e. $M(M-1)/2$ rotations)
\item When we use one single rotation which indices are chosen according to an automatic selection (i.e. automatic incrementation) throughout the iterations in such a way all search directions are visited periodically.
\item When we use two rotations per iteration (time instant): one pair of indices is selected according to the maximum deviation criterion:
\begin{equation}\label{Eq61}
\begin{array}{l}
(p,q) = arg\max \sum_{k=1}^K(|y_{pk}|^2 -1)^2+(|y_{qk}|^2 -1)^2
\end{array}
\end{equation}
the other rotation indices are selected automatically.
\end{itemize}
Comparatively, the adaptive ACMA \cite{Veen_Chap_05} costs approximately $O(M^3)$ flops per iteration and the LS-CMA\footnote{We consider here an adaptive version of the LS-CMA using the same sliding window as for our algorithm.} costs $O(KM^2 + M^3)$.  Interestingly, as shown in section \ref{Sec:Results}, the sliding window length $K$ can be chosen of the same order as the number of sources $M$ without affecting much the algorithm's performance. In that case, the numerical cost of HG-CMA becomes similar to that of the adaptive ACMA. The adaptive HG-CMA algorithm is summarized in Table \ref{Tab:AHGCMA}. Note that, the normalization step is done outside the sweep loop which reduces slightly the numerical cost.

\begin{table}[tb]
\centering
\renewcommand{\arraystretch}{1.5}
\begin{tabular}{l}
\hline
~Initialization:~$\mathbf{W}^{(K)}=\mathbf{I}_{M}$\\
~\textbf{For} $t = K+1, K+2, ...$ \textbf{do}\\
~~~~~~$\mathbf{y}(t)=\mathbf{W}^{(t-1)}~\mathbf{y}(t)$\\
~~~~~~$\mathbf{Y}^{(t)}=\left[\mathbf{y}(t-K),...,\mathbf{y}(t-1),\mathbf{y}(t)\right]$\\
~~~~~~$\mathbf{W}^{(t)}=\mathbf{W}^{(t-1)}$\\
~~~~~~\textbf{For all} $1 \leq p <q \leq M$ \textbf{do}\\
~~~~~~~~~~~~Compute $\mathbf{H}_{(pq)}$ using (\ref{Eq44}) and (\ref{Eq56})\\
~~~~~~~~~~~~Compute $\mathbf{\Psi}_{(pq)}$ using (\ref{Eq21})\\
~~~~~~~~~~~~Update  $\mathbf{W}^{(t)}=\mathbf{\Psi}_{(pq)}~\mathbf{H}_{(pq)}~\mathbf{W}^{(t)}$\\
~~~~~~~~~~~~Update  $\mathbf{Y}^{(t)}=\mathbf{\Psi}_{(pq)}~\mathbf{H}_{(pq)}~\mathbf{Y}^{(t)}$\\
~~~~~~\textbf{end For} \\
~~~~~~\textbf{For} $1\leq p\leq M$,~compute $\lambda_p$ using (\ref{Eq60}),~\textbf{end For}\\
~~~~~~Compute $\mathbf{D}=\mbox{diag}([\lambda_1, \cdots, \lambda_M])$\\
~~~~~~Update $\mathbf{W}^{(t)}=\mathbf{D}~\mathbf{W}^{(t)}$ and $\mathbf{Y}^{(t)}=\mathbf{D}~\mathbf{Y}^{(t)}$\\
~\textbf{end For}\\
\hline
\end{tabular}
\caption{Adaptive HG-CMA Algorithm.}
\label{Tab:AHGCMA}
\end{table}
\section{Numerical Results}
\label{Sec:Results}
Some numerical results are now presented in order to assess the performance of the proposed algorithms. For comparison we use ACMA \cite{Veen_ACMA_96} and LS-CMA \cite{Veen_Chap_05} as a benchmark. As performance measure, we use the signal to interference
and noise ratio (SINR) defined as:
\begin{eqnarray}\label{eq60}
\begin{array}{l}
\textrm{SINR}=\frac{1}{M}\sum_{k=1}^{M}\textrm{SINR}_{k}\mbox{ with }~~\textrm{SINR}_{k}= \frac{|g_{kk}|^{2}}{\sum\limits_{\ell,\ell\neq k}|g_{k\ell}|^{2}+\mathbf{w}_{k}\mathbf{R}_{b}\mathbf{w}_{k}^{H}}
\end{array}
\end{eqnarray}
where $\textrm{SINR}_{k}$ is the signal to interference and noise ratio at the $k$th output $g_{ij}=\mathbf{w}_{i}\mathbf{a}_{j}$, where $\mathbf{w}_{i}$ and $\mathbf{a}_{j}$ are the $i$th row vector and $j$th column vector of matrices $\mathbf{W}$ and $\mathbf{A}$, respectively.
$\mathbf{R}_{b}=E[\mathbf{b}\mathbf{b}^{H}]=\sigma_{b}^{2}\mathbf{I}_{N}$ is the noise covariance matrix. The source signals are assumed to be of unit variance.

We use the data model in (\ref{Eq01}); The system inputs are independent, uniformly distributed and drawn from 8-PSK, or 16-QAM constellations. The channel matrices $\mathbf{A}$ are generated randomly at each run but with controlled conditioning (their entries are generated as i.i.d. Gaussian variables). Unless otherwise specified, we consider $M=5$ transmit and $N=7$ receive antennas. The noise variance is determined according to the desired signal to noise ratio (SNR). In all figures the results are averaged over 1000 independent realizations (Monte Carlo runs).

Fig. \ref{Fig_1} depicts the SINR of HG-CMA vs. the SNR. We compare the three solutions, i.e., linear approximation to zero, semi-exact and exact solutions for Shear rotations in HG-CMA for 8-PSK and 16-QAM constellations. The sample size is $K=100$ and the number of sweeps is set equal to 10. We observe that the three solutions have almost the same performance for both 8-PSK and 16-QAM constellations. Therefore, in the following simulations, in HG-CMA, we will consider the linear approximation to zero solution.

In Fig. \ref{Fig_2}, we  investigate the effect of the number of sweeps on the performance of G-CMA and HG-CMA. The figure shows the SINR vs. the SNR for different numbers of sweeps. In this simulation, we assumed 8-PSK constellation and $K=100$ samples. We observe that, as expected, the performance is improved by increasing the number of sweeps and from 5 sweeps upwards, the performance remains unchangeable. In the rest of this section we consider $10$ sweeps in G-CMA and HG-CMA. Moreover, we can see that for small number of iterations HG-CMA is much better than G-CMA and the gap between them decreases as the number of iterations increases.

Fig. \ref{Fig_3} compares the proposed HG-CMA and G-CMA algorithms with ACMA in terms of SINR vs. SNR for 8-PSK constellation and various numbers of samples. We observe that, as expected, the larger the number of samples, the better the performance for all algorithms. For small number of samples, i.e. $K=20$, we observe that HG-CMA significantly outperforms ACMA and G-CMA. We also observe that G-CMA performs better than ACMA  for low to moderate SNR while for $\mathrm{SNR}>23~\mathrm{dB}$, ACMA becomes better. The reason that ACMA performs worse than HG-CMA is that the number of samples $K=20$ is less than the number of transmit antennas squared $M^2$, i.e., $K=20<M^2=25$ and as we stated above for ACMA to achieve good performance in the case of PSK constellations the number of samples $K$ must be at least greater than $M^2$ \cite{Veen_01}. For $K=100$, HG-CMA still provides the best performance while the performance of ACMA becomes very close to that of HG-CMA and better than that of G-CMA. We can say that for small or moderate number of samples the proposed algorithms are more suitable as compared to ACMA even for PSK constellations.

In Fig. \ref{Fig_4}, we consider the case of 16-QAM constellation. We notice that the proposed HG-CMA and G-CMA algorithms provide better performance as compared to ACMA. We also observe that, unlike the 8-PSK case in Fig. \ref{Fig_3}, the performance of HG-CMA and G-CMA are close in the case of 16-QAM. Moreover, we can see that the gap between the performance of the proposed algorithms and ACMA gets smaller as the number of samples $K$ increases. We can say that the proposed HG-CMA and G-CMA algorithms are more suitable as compared to ACMA for non-constant modulus constellations, since they provide better performance for a lower computational cost.

In Figs. \ref{Fig_5} and \ref{Fig_6}, we plot the SINR of HG-CMA, G-CMA and ACMA vs. the number of samples $K$ for 8-PSK and 16-QAM constellations, respectively. We compare the performance of the proposed algorithms HG-CMA and G-CMA with ACMA for different antenna configurations and SNR=30 dB. In both figures we observe that, the larger the number of samples, the better the performance. In Fig. \ref{Fig_5}, in the case of 8-PSK constellation, we observe that HG-CMA provides the best performance. For small number of samples, G-CMA outperforms ACMA. However, for large number of samples ACMA performs better. In Fig. \ref{Fig_6} for 16-QAM, HG-CMA and G-CMA outperform ACMA and the gap is larger for small number of samples and decreases as the number of samples increases.

In Figs. \ref{Fig_7} and \ref{Fig_8} we plot the symbol error rate (SER) of HG-CMA, G-CMA and ACMA vs. SNR for different number of samples $K$ for 8-PSK and 16-QAM constellations, respectively. We considered $M=5$ and $N=7$. In Fig. \ref{Fig_7}, for 8-PSK case, we notice that the proposed HG-CMA provides the best performance. We also observe that G-CMA outperforms ACMA for small number of samples, here $K=20$. However, for large number of samples ACMA performs better than G-CMA for all SNRs. Note that for very large SNR and $K \geq M^2$ it is expected that ACMA outperforms HG-CMA since ACMA in this case provides the optimal (exact in the noiseless case) solution. In the case of 16-QAM in Fig. \ref{Fig_8}, we observe that the proposed HG-CMA and G-CMA algorithms always outperform ACMA, even for large number of samples. Therefore, we can conclude that the proposed HG-CMA and G-CMA are preferable to ACMA in the case of non-constant modulus constellations, i.e. 16-QAM, for any number of samples. In the case of constant modulus constellations, e.g. PSK, HG-CMA and G-CMA are better than ACMA for small number of samples. However, for large number of samples and the range of interest of SNR from $0-30$ dB, HG-CMA and ACMA have close performance and ACMA is better than G-CMA.

To assess the performance of the adaptive HG-CMA, we consider here, unless stated otherwise, a $5\times5$ MIMO system (i.e. $M=5$), an i.i.d. 8-PSK modulated sequences as input sources, and the processing window size is set equal to $K=2M$. In Fig. \ref{Fig_Time}, we compare the convergence rates and separation quality of adaptive HG-CMA (with different number of rotations per time instant), LS-CMA and adaptive ACMA. One can observe that adaptive HG-CMA outperforms the two other algorithms in this simulation context. Even with only two rotations per time instant, our algorithm leads to high separation quality with fast convergence rate (typically, few tens of iterations are sufficient to reach the steady state level).

In Fig. \ref{Fig_SNR}, the plots represent the steady state SINR (obtained after 1000 iterations) versus the SNR. One can see that the adaptive HG-CMA has no floor effect (as for the LS-CMA and adaptive ACMA) and its SINR increases almost linearly with the SNR in dB.

In Fig. \ref{Fig_M}, the SNR is set equal to $20dB$ and the plots represent again the steady state SINR versus the number of sources $M$. Severe performance degradation is observed (when the number of sources increases) for the LS-CMA and adaptive ACMA while the adaptive HG-CMA performance seems to be unaffected. In Fig. \ref{Fig_K}, the plots illustrate the algorithms performance versus the chosen processing window size\footnote{This concerns only LS-CMA and adaptive HG-CMA as the adaptive ACMA in \cite{Veen_Chap_05} uses an exponential window with parameter $\beta = 0.995$.} $K$. Surprisingly, HG-CMA algorithm reaches its optimal performance with relatively short window sizes ($K$ can be chosen of the same order as $M$).

In the last experiment (Fig. \ref{Fig_QAM}), we consider 16-QAM sources (with non CM property). In that case, all algorithms performance are degraded but adaptive HG-CMA still outperforms the two other algorithms. To improve the performance in the case of non constant modulus signals, one needs to increase the processing window size as illustrated by this simulation result but more importantly, one needs to use more elaborated cost functions which combines the CM criterion with alphabet matching criteria e.g. \cite{Amine_04, Labed_13}.

\begin{figure}
    \centering
    \includegraphics[width=15cm,height=12cm]{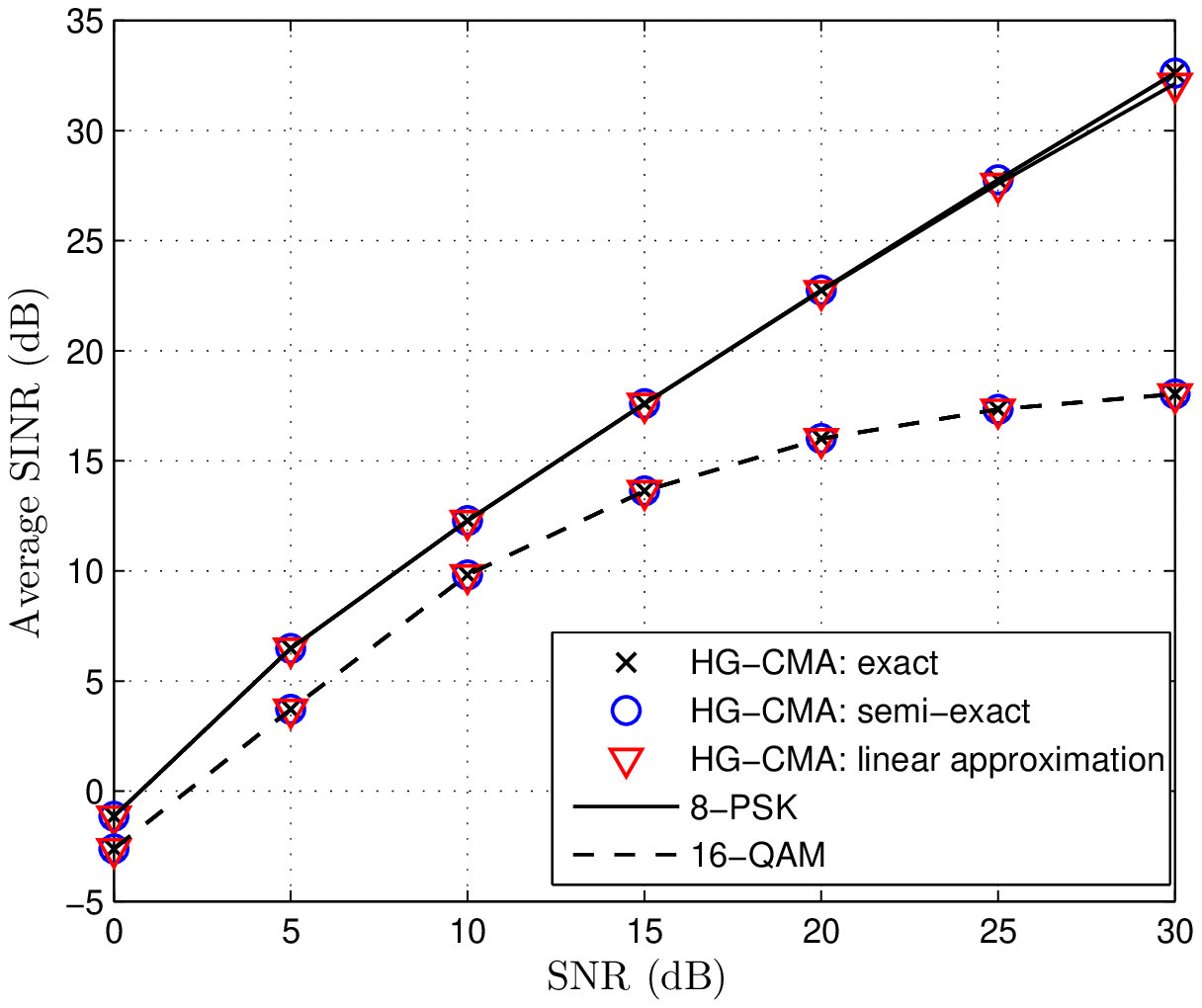}
    \caption{Average SINR of HG-CMA vs. SNR. $M=5$, $N=7$, $K=100$, 8-PSK, 16-QAM, and the number of sweeps is 10.}\label{Fig_1}
\end{figure}

\begin{figure}
    \centering
    \includegraphics[width=15cm,height=12cm]{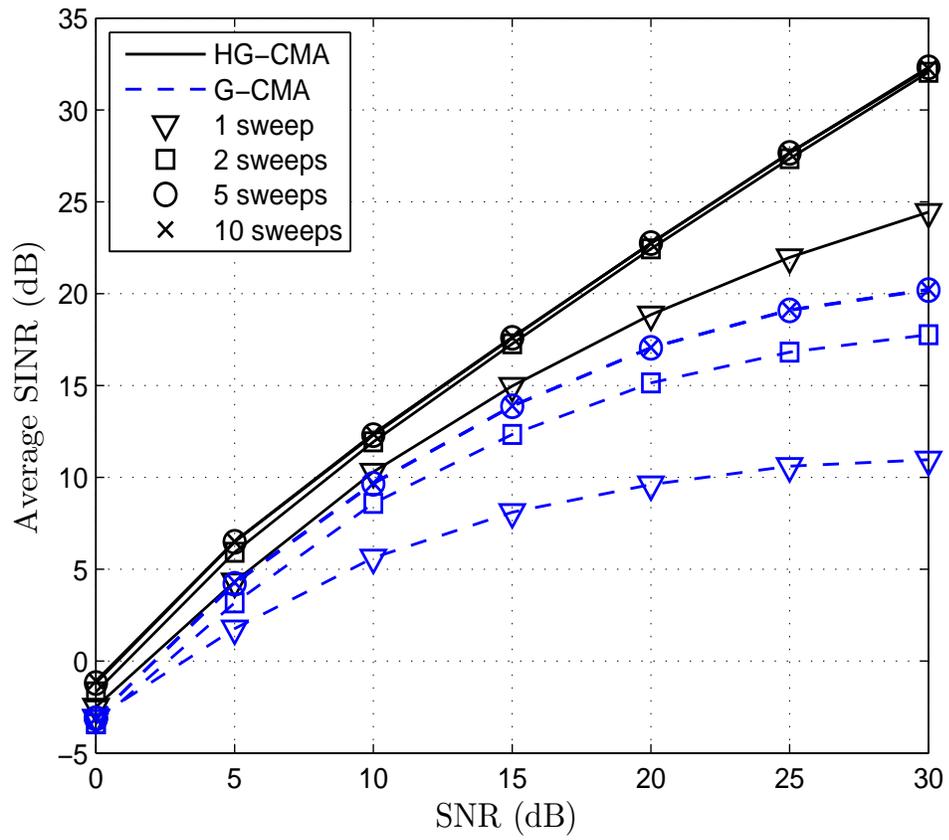}
    \caption{Average SINR of HG-CMA and G-CMA vs. SNR. The effect of the number of sweeps on the performance of G-CMA. $M=5$, $N=7$, $K=100$, and 8-PSK.}\label{Fig_2}
\end{figure}
\begin{figure}
    \centering
    \includegraphics[width=15cm,height=12cm]{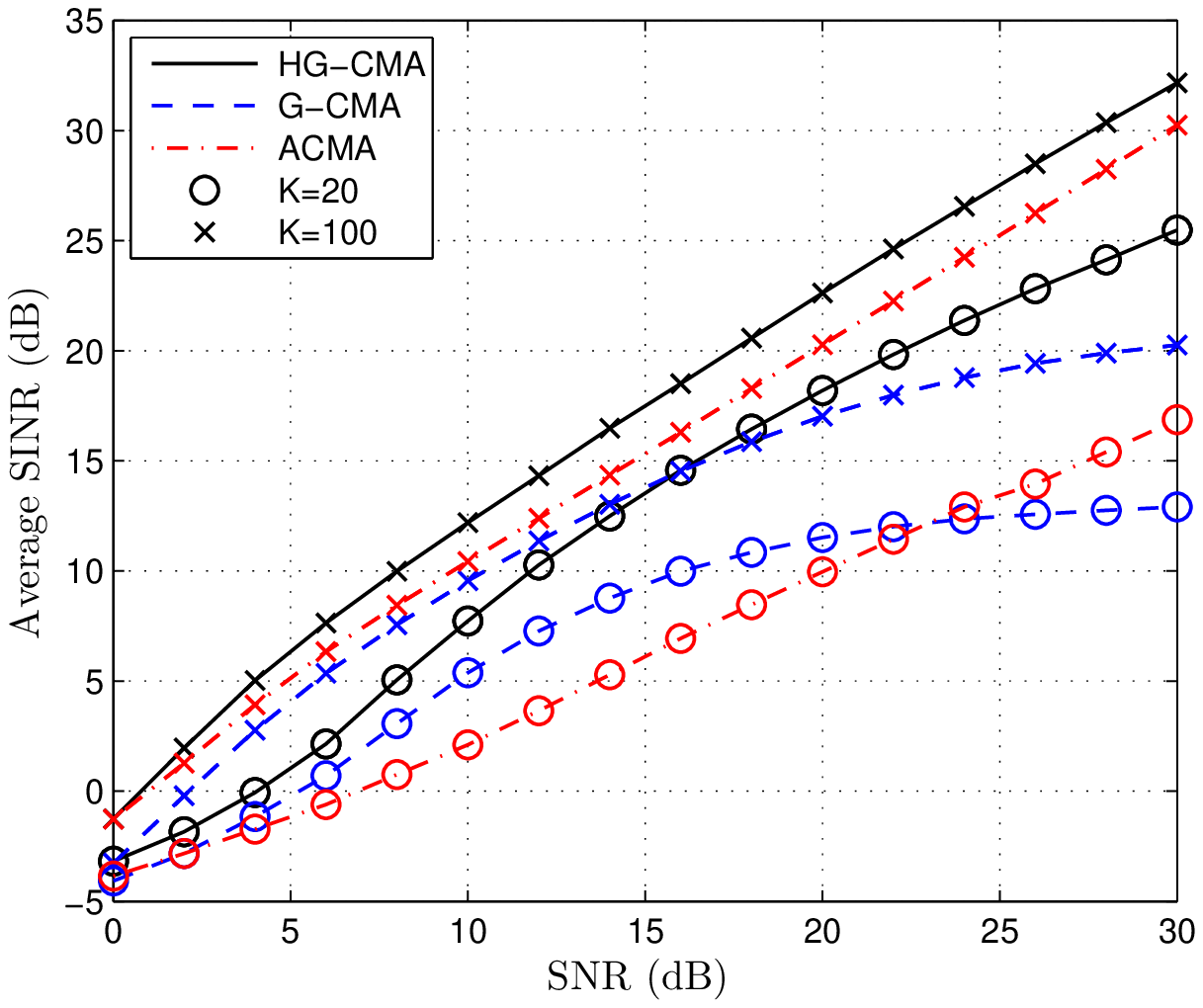}
    \caption{Average SINR of HG-CMA, G-CMA, and ACMA vs. SNR for different numbers of samples $K$. 8-PSK case, $M=5$, $N=7$, and 10 sweeps.}\label{Fig_3}
\end{figure}
\begin{figure}
    \centering
    \includegraphics[width=15cm,height=12cm]{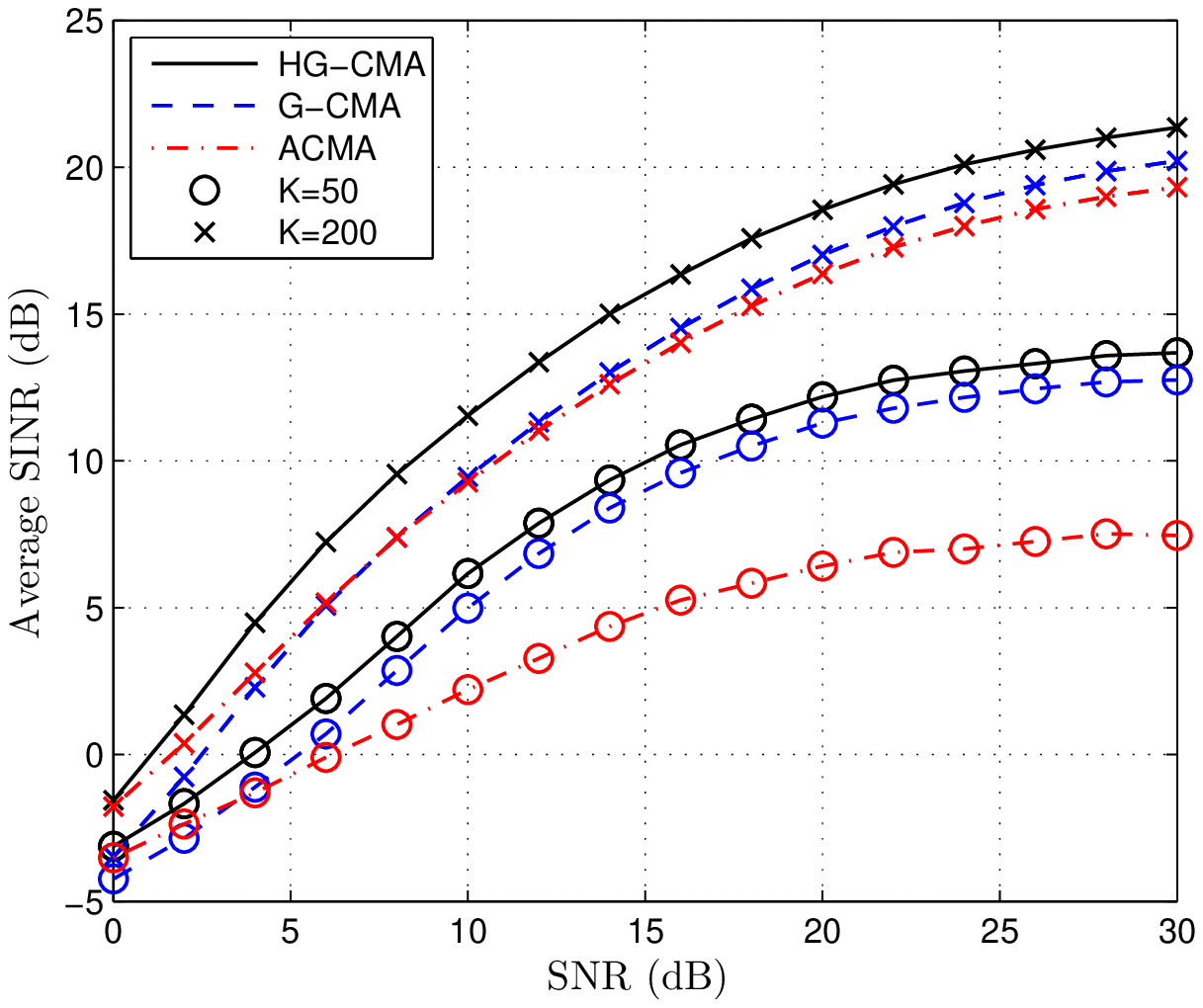}
    \caption{Average SINR of HG-CMA, G-CMA and ACMA vs. SNR for different numbers of samples $K$. 16-QAM case, $M=5$, $N=7$, and 10 sweeps.}\label{Fig_4}
\end{figure}
\begin{figure}
    \centering
    \includegraphics[width=15cm,height=12cm]{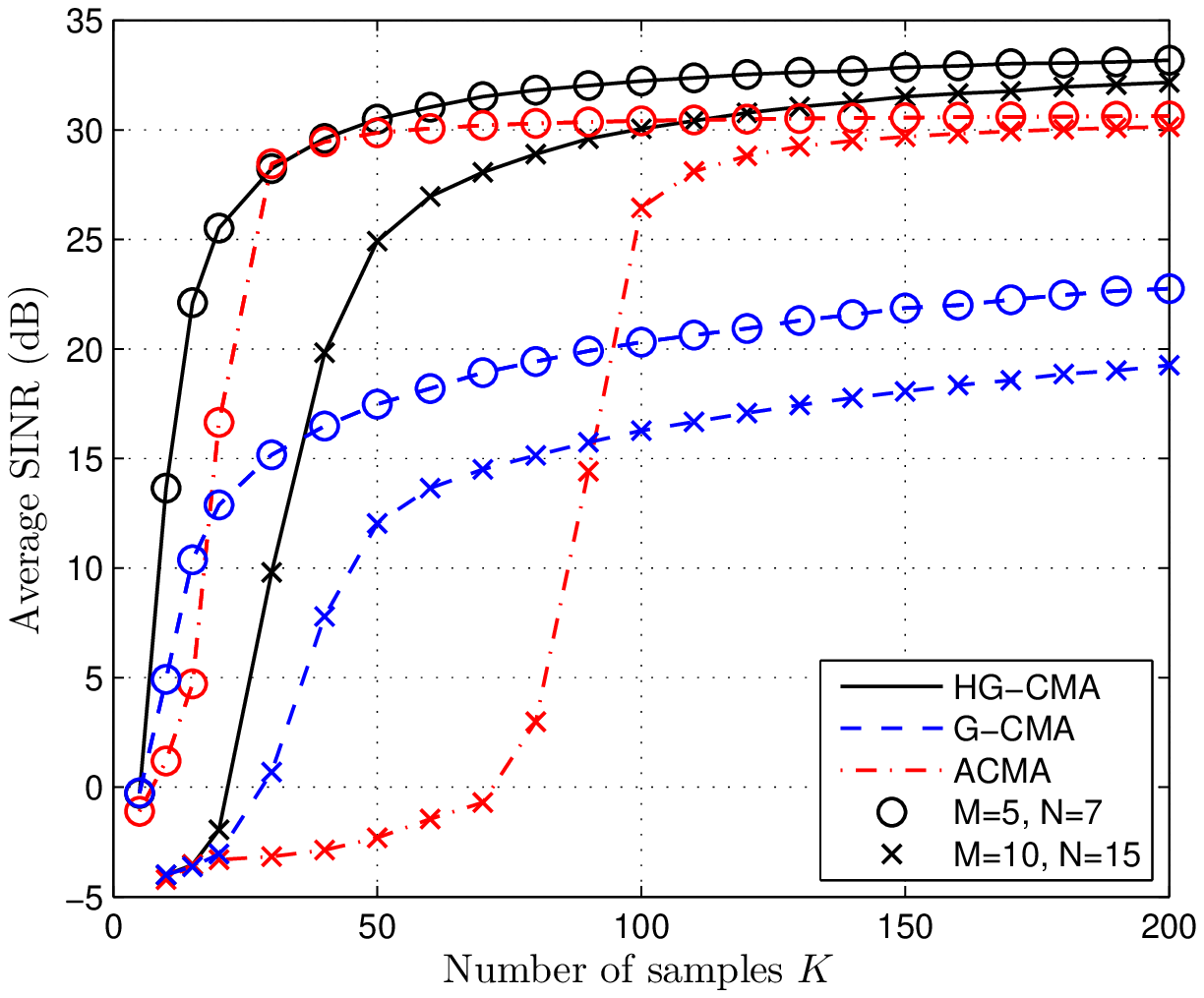}
    \caption{Average SINR of HG-CMA, G-CMA and ACMA vs. the number of samples $K$ for different antenna configurations. 8-PSK case, SNR=30 dB, and 10 sweeps.}\label{Fig_5}
\end{figure}

\begin{figure}
    \centering
    \includegraphics[width=15cm,height=12cm]{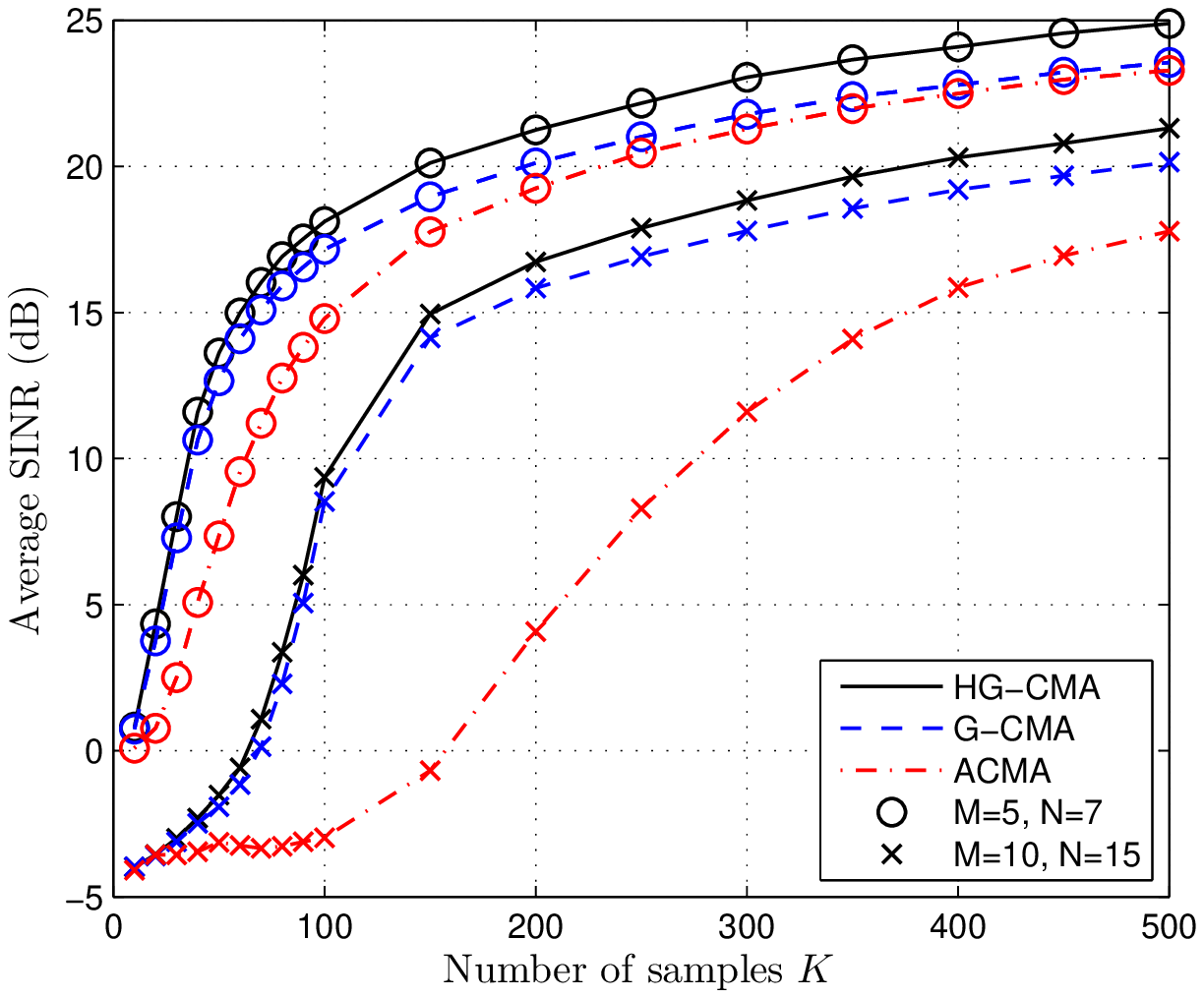}
    \caption{Average SINR of HG-CMA, G-CMA and ACMA vs. the number of samples $K$ for different antenna configurations. 16-QAM case, SNR=30 dB, and 10 sweeps.}\label{Fig_6}
\end{figure}

\begin{figure}
    \centering
    \includegraphics[width=15cm,height=12cm]{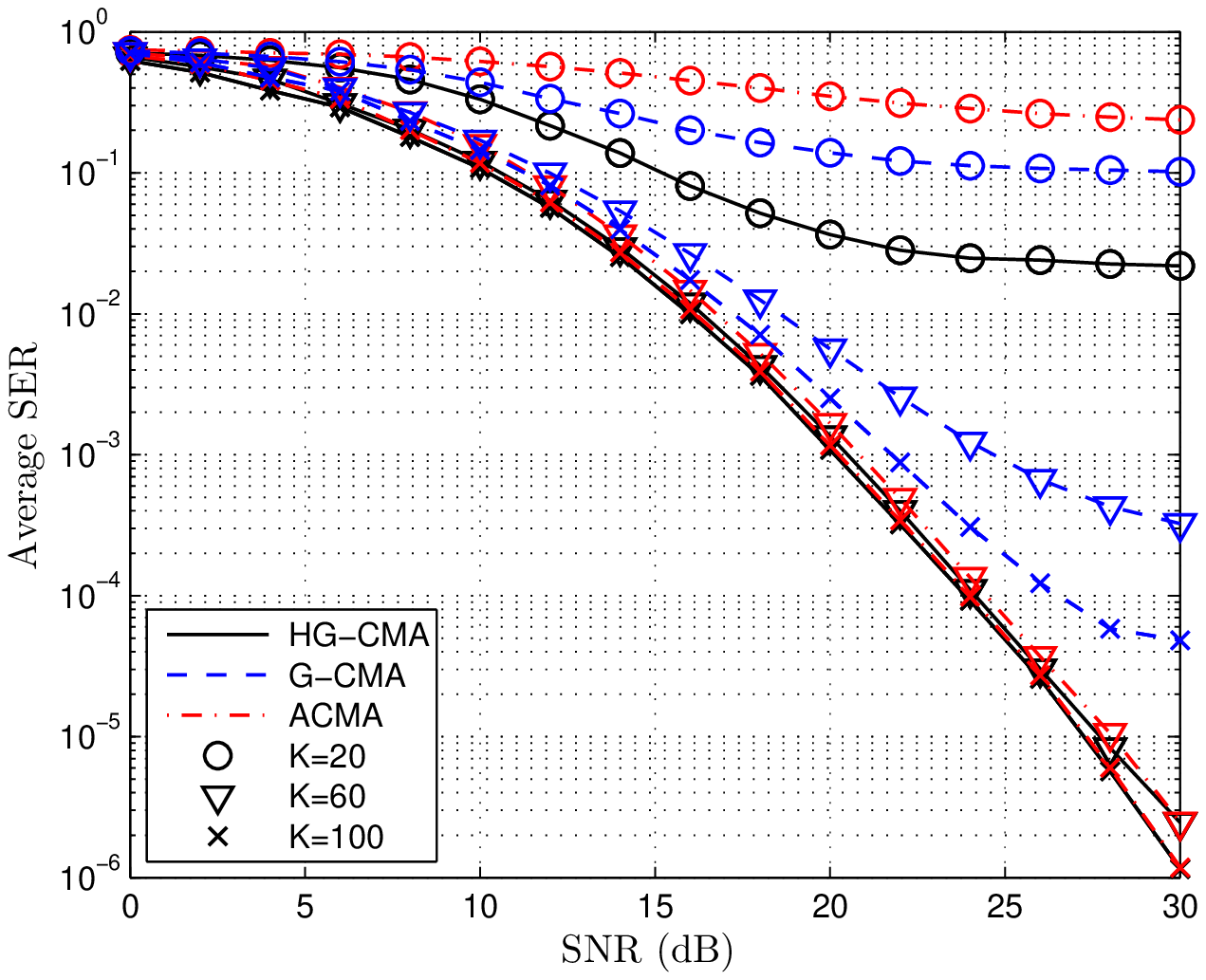}
    \caption{Average symbol error rate of HG-CMA, G-CMA and ACMA vs. SNR for different numbers of samples $K$. 8-PSK case, $M=5$, $N=7$, and 10 sweeps.}\label{Fig_7}
\end{figure}

\begin{figure}
    \centering
    \includegraphics[width=15cm,height=12cm]{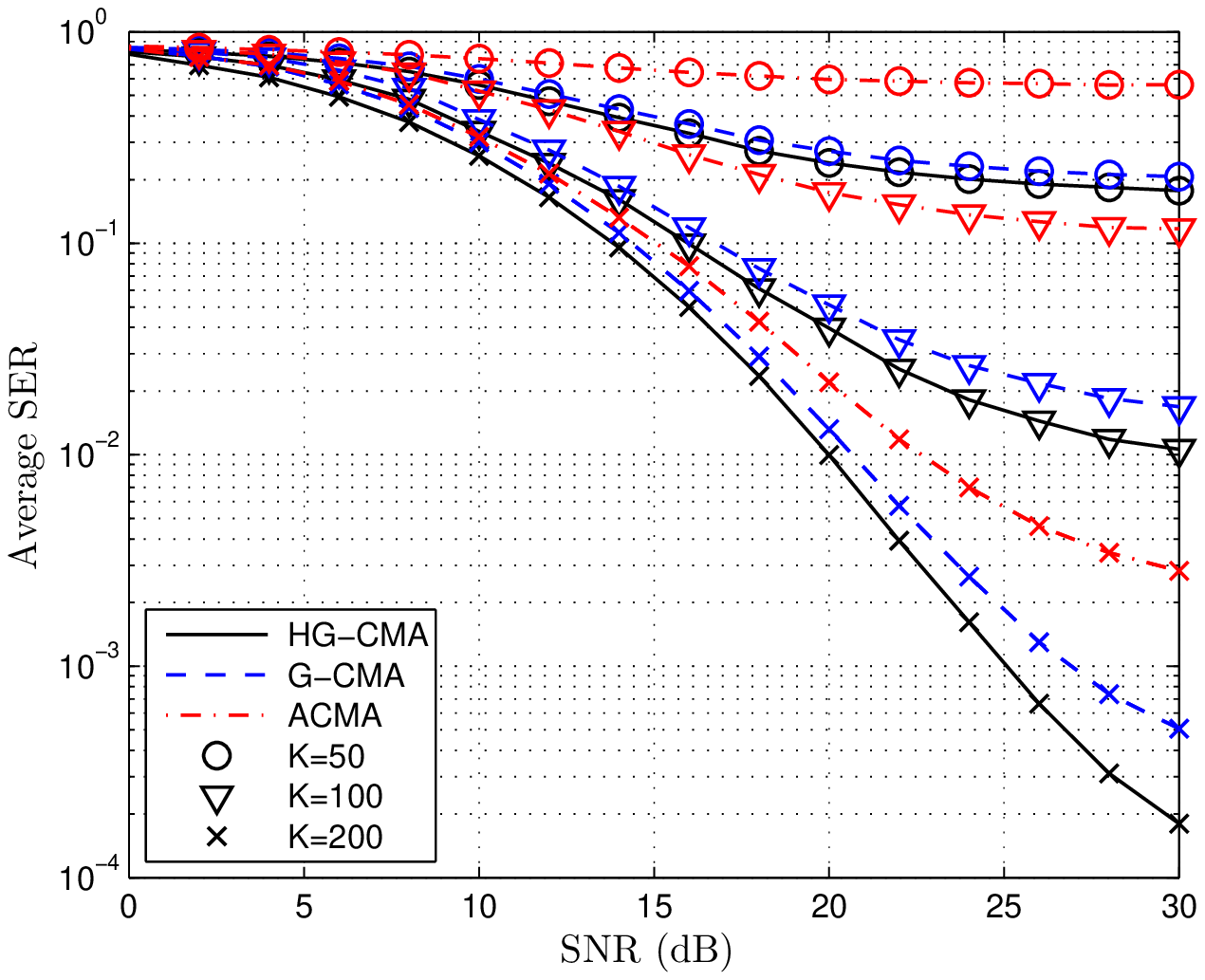}
    \caption{Average symbol error rate of HG-CMA, G-CMA and ACMA vs. SNR for different numbers of samples $K$. 16-QAM case, $M=5$, $N=7$, and 10 sweeps.}\label{Fig_8}
\end{figure}

\begin{figure}
  \centering
	\includegraphics[width=15cm,height=12cm]{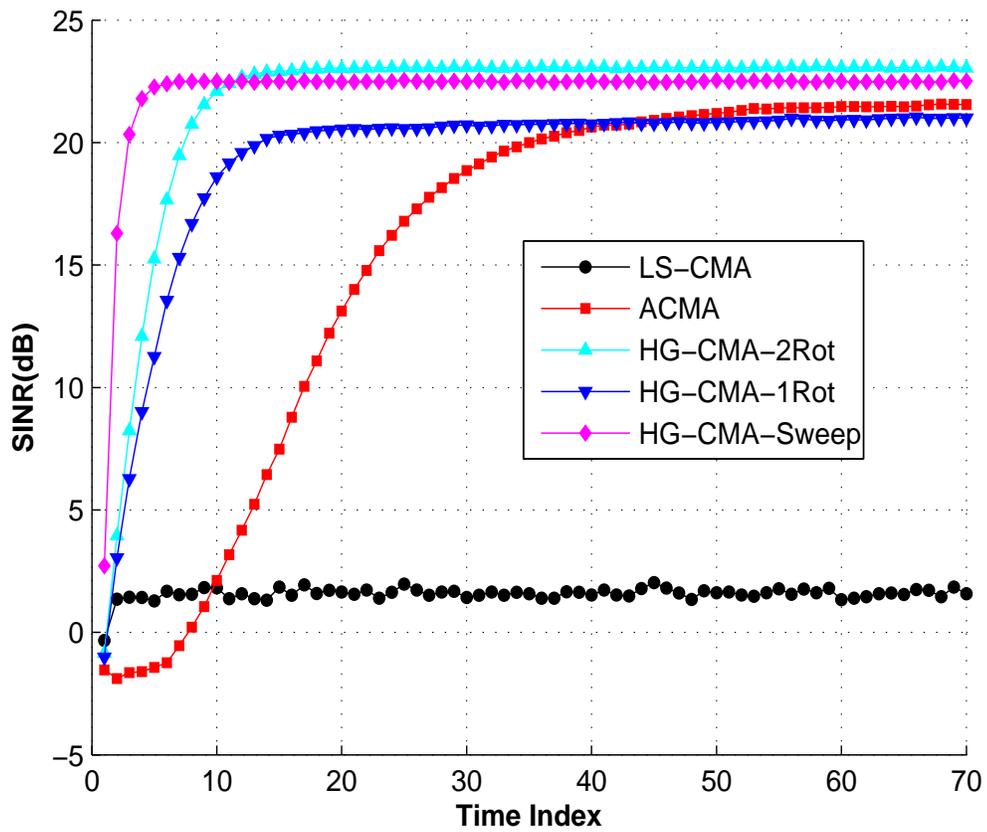}
\caption{SINR vs. Time Index: $SNR=20dB$, $M=N=5$, $K=10$, 8-PSK.}
\label{Fig_Time}
\end{figure}

\begin{figure}
  \centering
	\includegraphics[width=15cm,height=12cm]{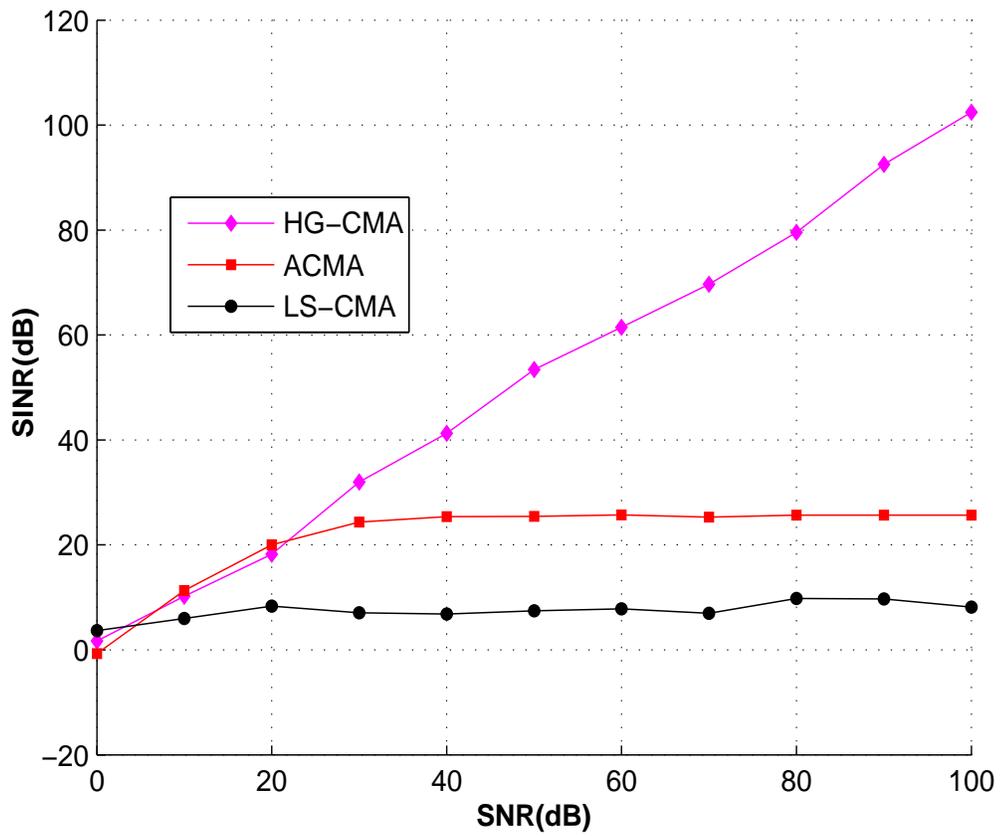}
\caption{SINR vs. SNR: $M=N=5$, $K=10$, 8-PSK.}
\label{Fig_SNR}
\end{figure}

\begin{figure}
  \centering
	\includegraphics[width=15cm,height=12cm]{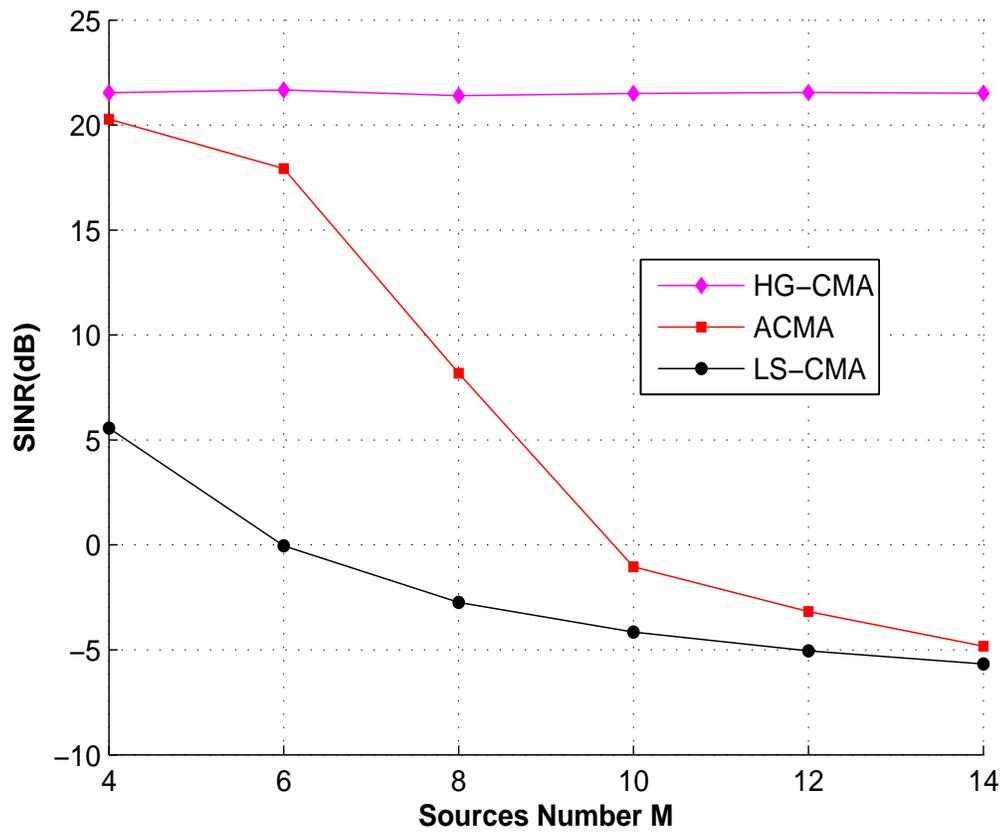}
\caption{SINR vs. Source Number: $SNR=20dB$, $K=2M$, 8-PSK.}\label{Fig_M}
\end{figure}

\begin{figure}
  \centering
	\includegraphics[width=15cm,height=12cm]{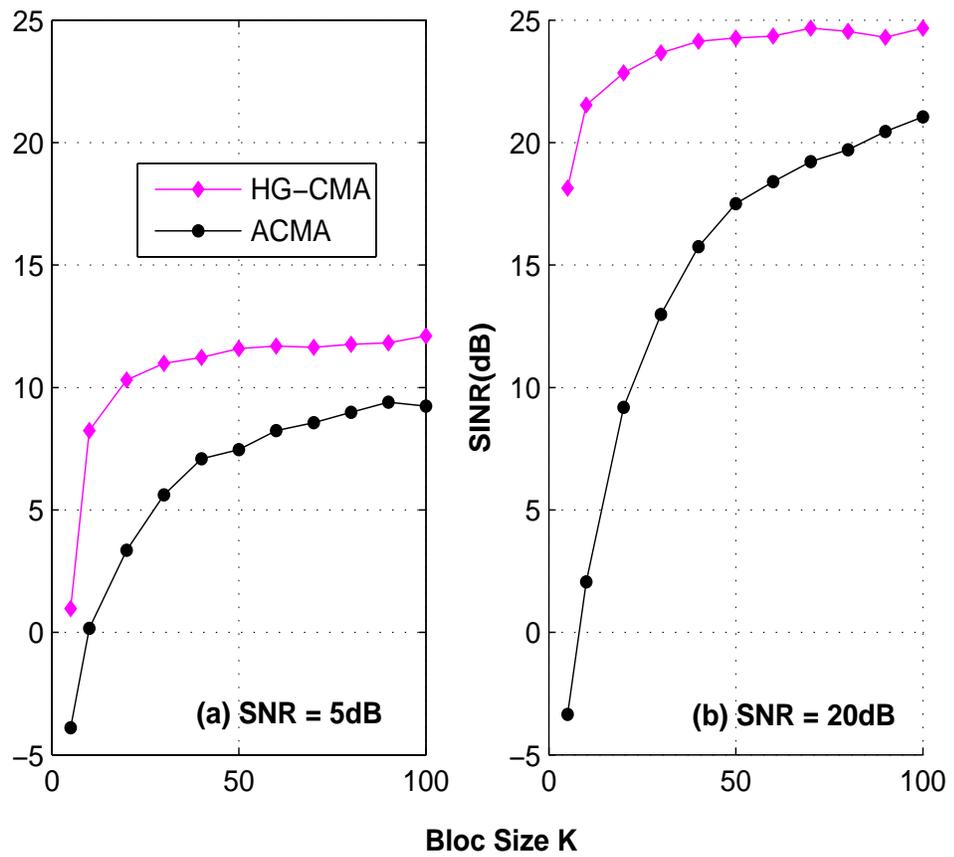}
	\caption{SINR vs. Bloc Size K: $M=N=5$, 8-PSK.}\label{Fig_K}
\end{figure}

\begin{figure}
  \centering
	\includegraphics[width=15cm,height=12cm]{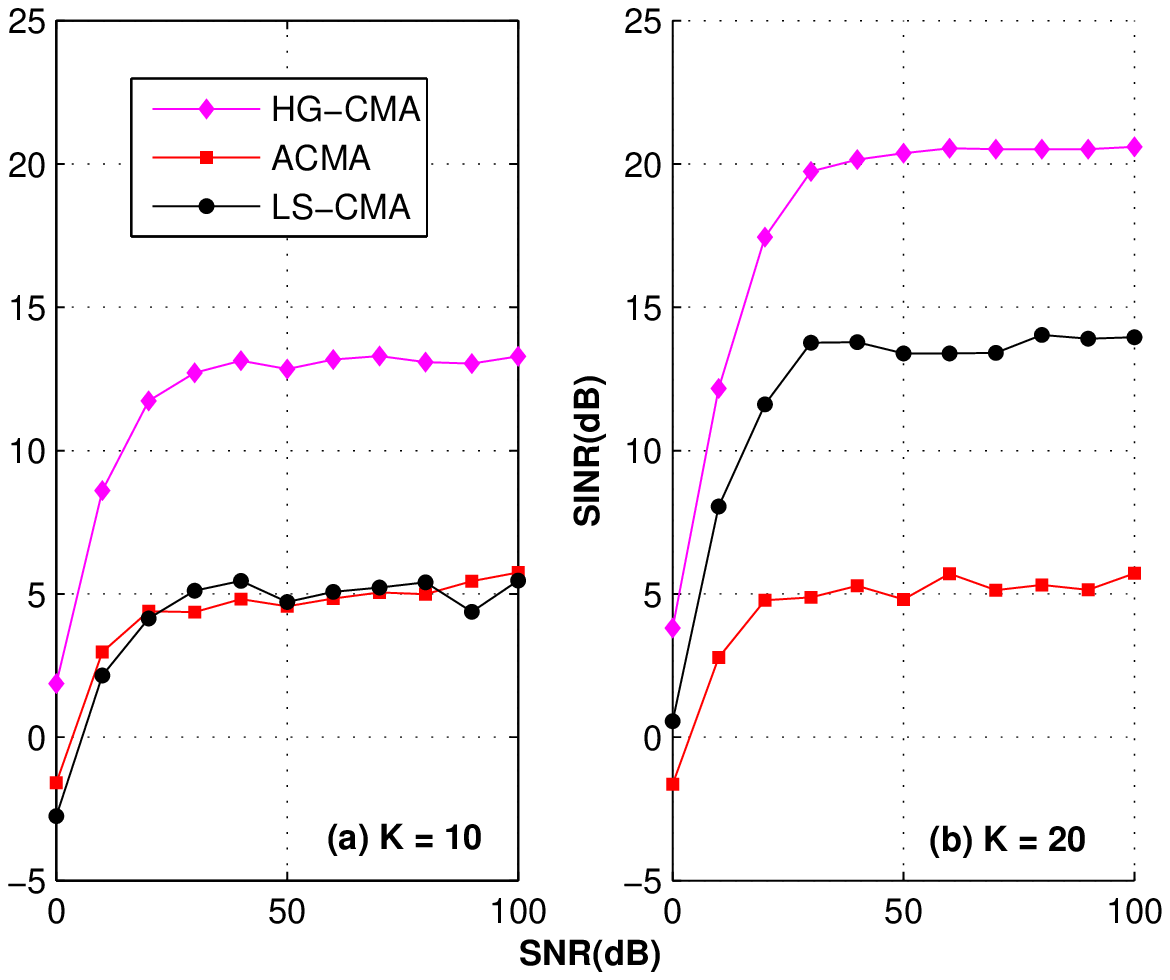}
	\caption{SINR vs. SNR: $M=N=5$, 16-QAM.}\label{Fig_QAM}
\end{figure}

\section{Conclusion}
\label{Sec:Conclusion}

We proposed two algorithms, G-CMA and HG-CMA, for BSS in the context of MIMO communication systems based on the CM criterion. In G-CMA we combined prewhitening and Givens rotations and in HG-CMA we combined Shear rotations and Givens rotations. G-CMA is appropriate for large number of samples since in this case prewhitening is accurate. However, in the case of small number of samples HG-CMA is preferred since Shear rotations allow to compensate for the prewhitening stage, i.e., reduce the departure from normality. For PSK constellations and small number of samples, we showed that the proposed HG-CMA and G-CMA algorithms are better than the conventional ACMA. However for large number of samples HG-CMA and ACMA have close performance and ACMA outperforms G-CMA. In the case of 16-QAM constellation, HG-CMA and G-CMA outperform largely the conventional ACMA for small number of samples.

Also, for the HG-CMA, a moderate complexity adaptive implementation is considered with the advantages of fast convergence rate and high separation quality. The simulation results  illustrate its effectiveness as compared to the adaptive implementations of ACMA and LS-CMA. They show that the sliding window size can be chosen as small as twice the number of sources without significant performance loss.  Also, they  illustrate the trade off between the convergence rate and the algorithm's numerical cost as a function of the number of used rotations per iteration. As a perspective, the proposed technique can be adapted for the optimization  of more elaborated cost functions which combine the CM criteria with alphabet matching criteria.
\section{Appendix A} \label{Sec:AppA}
It has been shown in subsection \ref{Sub:ExactSol} that the optimal solution in the sense of minimizing the CM criterion in (\ref{Eq27}) is given by (see equation (\ref{Eq38})):
\begin{equation}\label{A1}
\mathbf{u}=(\mathbf{R}+\lambda \mathbf{J}_{3})^{-1} \mathbf{r}
\end{equation}
where $\lambda$ is the solution of:
\begin{equation}\label{A2}
 \mathbf{u}^T \mathbf{J}_{3} \mathbf{u} = 1 \Longleftrightarrow \mathbf{r}^T (\mathbf{R}+\lambda \mathbf{J}_{3})^{-1}\mathbf{J}_{3}(\mathbf{R}+\lambda \mathbf{J}_{3})^{-1} \mathbf{r}=1
\end{equation}
In the following, we will show that (\ref{A2}) is a $6$-th order polynomial equation.
Let the $3 \times 3$ matrices $\mathbf{U}$ and $\mathbf{\Lambda}=\mbox{diag}\left[\lambda_1~\lambda_2~\lambda_3 \right]$ be the generalized eigenvectors and eigenvalues matrices of the matrix pair ($\mathbf{R}$,~$\mathbf{J}_3$), i.e.
\begin{equation}\label{A3}
    \mathbf{R} = \mathbf{J}_3~\mathbf{U}~\mathbf{\Lambda}~\mathbf{U}^{-1}
\end{equation}
and hence:
\begin{equation}\label{A4}
\left(\mathbf{R}+\lambda\mathbf{J}_3\right)^{-1}=\mathbf{U}\left(\mathbf{\Lambda}+ \lambda\mathbf{I}_3\right)^{-1}\mathbf{U}^{-1}\mathbf{J}_3
\end{equation}
replacing (\ref{A4}) in (\ref{A2}) leads to:
\begin{equation}\label{A5}
\mathbf{r}^T\mathbf{U}\left(\mathbf{\Lambda}+\lambda\mathbf{I}_3\right)^{-2}\mathbf{U}^{-1}\mathbf{J}_3 \mathbf{r}=\mathbf{a}^T\left(\mathbf{\Lambda}+\lambda\mathbf{I}_3\right)^{-2}\mathbf{b}=1
\end{equation}
where $\mathbf{a}^T=\mathbf{r}^T\mathbf{U}=\left[a_1~a_2~a_3\right]$ and $\mathbf{b}=\mathbf{U}^{-1}\mathbf{J}_3 \mathbf{r}=\left[b_1~b_2~b_3\right]^T$. Knowing that $\left(\mathbf{\Lambda}+\lambda\mathbf{I}_3\right)^{-2} = \mbox{diag}\left[(\lambda+\lambda_1)^{-2},~ (\lambda+\lambda_2)^{-2},~(\lambda+\lambda_3)^{-2}\right]$, (\ref{A5}) is rewritten as:
\begin{equation}\label{A6}
\sum_{i=1}^{3}\frac{a_i b_i}{\left(\mathbf{\lambda}+\lambda_i \right)^{2}}=1
\end{equation}
which is equivalent to:
\begin{equation}\label{A8}
\prod_{i=1}^{3}\left(\mathbf{\lambda}+\lambda_i \right)^{2} - \sum_{i=1}^{3} a_i b_i \prod_{j=1, j\neq i}^{3}\left(\mathbf{\lambda}+\lambda_j \right)^{2}=0
\end{equation}
Which is a $6$-th order polynomial equation of the form $P_6(\lambda)=c_0\lambda^6+c_1\lambda^5+c_2\lambda^4+c_3\lambda^3+c_4\lambda^2+c_5\lambda+c_6=0$ with:
\begin{eqnarray} \label{A9}
\begin{array}{lll}
c_0 = 1,~~c_1 = 2\sum_{i=1}^{3}\lambda_i, ~~c_2 = \sum_{i=1}^{3}\left(\lambda_i^2+4\prod_{j=1,j\neq i}^{3}\lambda_j\right)-\mathbf{a}^T\mathbf{b} \nonumber \\
c_3 = 2\sum_{i=1}^{3}\left(\left(\lambda_i^2-a_ib_i\right) \sum_{j=1,j\neq i}^{3}\lambda_j\right), ~~
c_6 = \prod_{i=1}^{3}\lambda_i^2 -\sum_{i=1}^{3}a_ib_i\prod_{j=1,j\neq i}^{3}\lambda_j^2 \nonumber \\
c_4 = \lambda_1^2\lambda_2^2\left(1+\lambda_3^2\right)+4\prod_{i=1 }^{3}\lambda_i\sum_{i=1}^{3}\lambda_i-  \sum_{i=1}^{3}a_ib_i\left(\sum_{j=1,j\neq i}^{3}\lambda_j^2+4\prod_{j=1,j\neq i}^{3}\lambda_j\right)\nonumber \\
c_5 = 2\left(\prod_{i=1}^{3}\lambda_i\right)\left(\sum_{i=1}^{3}\prod_{j=1,j\neq i}^{3}\lambda_j\right) -\sum_{i=1}^{3}a_ib_i\left(\sum_{j=1,j\neq i}^{3}\lambda_j\right)\left(\prod_{j=1,j\neq i}^{3}\lambda_j\right)\nonumber \\

\end{array}
\end{eqnarray}

Using the same reasoning, we can find the coefficients of the $4$-th order polynomial equation in (\ref{Eq51});  $P_4(\lambda)=c_0\lambda^4+c_1\lambda^3+c_2\lambda^2+c_3\lambda^1+c_4=0$.
\begin{eqnarray} \label{A10}
\begin{array}{lll}
c_0 = 1, ~~c_1 = 2\sum_{i=1}^{2}\tilde{\lambda_i}, ~~ c_2 = \sum_{i=1}^{2}\tilde{\lambda_i}^2+4\prod_{j=1,j\neq i}^{2}\tilde{\lambda_j}-\tilde{\mathbf{a}}^T \tilde{\mathbf{b}} \nonumber \\
c_3 = 2\sum_{i=1}^{2}\left(\tilde{\lambda_i}^2- \tilde{a}_i \tilde{b}_i\right) \sum_{j=1,j\neq i}^{2}\tilde{\lambda_j}, ~~
c_4 = \prod_{i=1}^{2}\tilde{\lambda_i}^2 -\sum_{i=1}^{2} \tilde{a}_i \tilde{b}_i\prod_{j=1,j\neq i}^{2}\tilde{\lambda_j}^2 \nonumber
\end{array}
\end{eqnarray}
with $\tilde{\mathbf{a}}^T=\tilde{\mathbf{r}}^T\tilde{\mathbf{U}}=\left[\tilde{a}_1~\tilde{a}_2\right]$ and $\tilde{\mathbf{b}}=\tilde{\mathbf{U}}^{-1}\mathbf{J}_2 \tilde{\mathbf{r}}=\left[\tilde{b}_1~\tilde{b}_2\right]^T$.
Where the $2 \times 2$ matrices $\tilde{\mathbf{U}}$ and $\tilde{\mathbf{\Lambda}} = \mbox{diag} \left[\tilde{\lambda}_1 ~\tilde{\lambda}_2\right]$ represent the generalized eigendecomposition of the matrix pair ($\tilde{\mathbf{R}}$,~$\mathbf{J}_2$).

\end{document}